\documentclass[11pt]{article}
\usepackage{amsmath}   
\usepackage{amsthm}
\usepackage{amsfonts}
\usepackage{amssymb}    
\usepackage{xfrac} 
\usepackage{tabu}
\usepackage{bbm}

\usepackage{color}
\usepackage{tikz}
\usetikzlibrary{trees,snakes}
\usepackage{verbatim}
\usepackage{graphicx}
\usetikzlibrary{calc}
\usepackage{pgfplots}
\usepackage{cleveref}
\usepackage{graphicx}
\usepackage{subcaption}
\usepackage{mathrsfs} 
\usepackage{xcolor}

\setlength{\topmargin}{-.5in}
\setlength{\textheight}{8.8in}
\setlength{\oddsidemargin}{.125in}
\setlength{\textwidth}{6.25in}
\setlength{\parskip}{4pt plus 1pt}
\linespread{1.0}
\interdisplaylinepenalty=2500 

\newtheorem{theorem}{Theorem}
\newtheorem{lemma}{Lemma}
\newtheorem{proposition}{Proposition}
\newtheorem{definition}{Definition}
\newcommand{\nn}{\nonumber}

\newcommand{\dfn}{\stackrel{\triangle}{=}}

\newcommand {\reals} {\mathbb{R}}
\newcommand {\integers} {\mathbb{Z}}
\newcommand {\natur} {\mathbb{N}}
\newcommand {\Exp} {\mathbb{E}}
\newcommand {\prob} {\mathbb{P}}
\newcommand{\dint}{\mathrm{d}}

\newcommand {\bp} {\mbox{\boldmath $p$}}

\newcommand {\bs} {\mbox{\boldmath $s$}}

\newcommand {\bx} {\boldsymbol{x}}
\newcommand {\by} {\boldsymbol{y}}

\newcommand {\bP} {\mbox{\boldmath $P$}}

\newcommand {\bU} {\boldsymbol{U}}

\newcommand {\bX} {\boldsymbol{X}}
\newcommand {\bY} {\boldsymbol{Y}}

\newcommand {\balpha} {\boldsymbol{\alpha}}
\newcommand {\bbeta} {\boldsymbol{\beta}}
\newcommand {\bdelta} {\boldsymbol{\delta}}
\newcommand {\bmu} {\boldsymbol{\mu}}
\newcommand {\bGamma} {\boldsymbol{\Gamma}}
\newcommand {\bSigma} {\boldsymbol{\Sigma}}

\newcommand{\calB}{{\cal B}}

\newcommand{\calE}{{\cal E}}
\newcommand{\calF}{{\cal F}}

\newcommand{\calN}{{\cal N}}

\newcommand{\calS}{{\cal S}}

\newcommand{\calX}{{\cal X}}
\newcommand{\calY}{{\cal Y}}

\begin{document}
\thispagestyle{empty}
\title{Entropy Rate Bounds via Second-Order Statistics\\}
\author{\\ Ran Tamir \\}
\maketitle
	\begin{center}
		Signal and Information Processing Laboratory \\
		ETH Z\"urich, 8092 Z\"urich, Switzerland \\ 
		Email: tamir@isi.ee.ethz.ch
	\end{center}
	\vspace{1.5\baselineskip}
	\setlength{\baselineskip}{1.5\baselineskip}

\begin{abstract}
This work contains two single-letter upper bounds on the entropy rate of a discrete-valued stationary stochastic process, which only depend on second-order statistics, and are primarily suitable for models which consist of relatively large alphabets. The first bound stems from Gaussian maximum-entropy considerations and depends on the power spectral density (PSD) function of the process. While the PSD function cannot always be calculated in a closed-form, we also propose a second bound, which merely relies on some finite collection of auto-covariance values of the process. Both of the bounds consist of a one-dimensional integral, while the second bound also consists of a minimization problem over a bounded region, hence they can be efficiently calculated numerically. Examples are also provided to show that the new bounds outperform the standard conditional entropy bound. \\

\noindent
{\bf Index Terms:}  Entropy rate, hidden Markov process, second-order statistics.
\end{abstract}

\clearpage
\section{Introduction}

The Shannon entropy is one of the basic concepts of information theory, which provides a measure of uncertainty of a random variable. The information-theoretic definition of the entropy dates back to Shannon 1948's masterpiece \cite[Sec.\ 6]{Shannon}, where it has been proved to serve as a fundamental limit in lossless data compression. 
Ever since, the Shannon entropy has found its way to a numerous amount of different disciplines, ranging between machine learning \cite{IT_in_ML}, biology \cite{IT_in_Biology, IT_in_Biology2}, economics \cite{IT_in_Economics, IT_in_Economics2}, sociology \cite{IT_in_Sociology}, weather sciences \cite{IT_in_Metrology, IT_in_Metrology2}, and combinatorics \cite{IT_in_Combinatorics2, IT_in_Combinatorics}, for a non-exhaustive list. Although the definition of the Shannon entropy is fairly simple for a discrete random variable (as can be seen in \eqref{DEF_Shannon_Entropy} below), it cannot be calculated in closed-form for any such random variable. Even the Shannon entropies of some very common distributions, like the binomial distribution, the Poisson distribution, and the Borel distribution\footnote{A discrete random variable $X$ follows a Borel distribution if its PMF is given by
\begin{align}
P_{X}(n)= \frac{e^{-\mu n} (\mu n)^{n-1}}{n!},~~n=1,2,3,\ldots,~~\mu \in [0,1].
\end{align}
Although being somewhat less known, the Borel distribution is very common in queuing theory.}, do not admit closed-form expression. In such cases, one has merely two options: either to derive tight lower and upper bounds on the entropy \cite{Tight_Bounds1, Tight_Bounds2}, or to calculate it using a simple computer program. 

When it comes to stochastic processes, the problem becomes even more complicated, because a limiting operation is also involved. Because of that, only very few models have closed-form expressions for their entropy rates. For example, the derivation of the entropy rate of a finite order Markov process boils down to the calculation of a conditional entropy, which can also be quite exhausting in some models. Very special are the family of stationary Gaussian processes, whose differential entropy rate is given by a remarkable formula that only involves a one-dimensional integral over the logarithm of the power spectral density (PSD) function of the process \cite[p.\ 417]{Cover}. In some specific Gaussian models, like the auto-regressive moving-average (ARMA) parametric family of processes, the integral in this formula boils down to a very simple expression \cite{Ihara}. 
Similar results have been lately proved for other important Gaussian models that exhibit long-range dependence \cite{Feutrill}. 
Nonetheless, evaluation of exact values of entropy rates are a hard task in the general case. 
Still, in many cases of interest, like hidden Markov processes (HMPs) \cite{MERHAV2002}, the second order statistics can be relatively easily calculated (or estimated). 
Thus, the main objective of this work is to propose upper bounds on the entropy rates of discrete-valued stationary processes that relies merely on their second-order statistics.

The first result in this work is a generalization of a well-known upper bound on the entropy of a discrete random variable to discrete-valued stationary stochastic processes. This result stems from Gaussian maximum-entropy considerations and it only involves a one-dimensional integral over the logarithm of the PSD function of the process plus some constant. We demonstrate that this bound is applicable in a variety of models, like the discrete moving average (DMA) model \cite{Jacobs}, the quantized moving average (MA) model, and HMPs with relatively large or even infinite alphabets \cite{MacDonald}. Specifically, we show numerically that at least for the quantized MA model, the new bound is better than the standard conditional entropy bound.     

Since our first result depends on the PSD function of the process at hand, it cannot be useful in every case, since the calculation of the PSD function requires the entire (auto-)covariance function, and this alone is not always possible to derive in closed-form. Thus, the second result in this work is also a single-letter upper bound on the entropy rate, but this time, it depends merely on some finite set of the covariance function. 
In addition to a one-dimensional integral, this single-letter expression also involves a minimization problem in some bounded region, the dimension of which is identical to the number of covariance values in use. Nevertheless, both of these minimization and integration can be performed numerically quite efficiently. To exhibit the usefulness of our second result, we analyze the quantized-hidden auto-regressive (AR) process, and show that the new proposed upper bound is better than the standard conditional entropy bound. For this specific AR model, our first result is practically irrelevant, since it relies on a one-dimensional integral over the PSD function of the process, and in this case, the covariance function itself is already given by a relatively cumbersome expression.            

In the realm of discrete-valued stationary stochastic processes, not many specific bounds are known, except for the standard conditional entropy upper bound. 
One process that has been extensively studied over the past two decades is the binary HMP, which is formed by passing a simple binary Markov chain through a binary channel. Both lower and upper bounds have been derived in \cite{OW2011}, using a new approach for bounding the entropy rate of HMP by constructing an alternative Markov process corresponding to the log-likelihood ratio of estimating the positivity of the current hidden symbol $X_{0}$ based on the past observations $Y_{-\infty}^{0}$. The techniques of \cite{OW2011} have been used in \cite{NOW2005} to obtain tight bounds for the entropy rate in the rare-transition regime.
Entropy rates for HMPs with rare transitions have also been studied in \cite{PQ2011}.    
More refined lower bounds on the entropy rate of binary HMPs have been derived in \cite{OS2015} using a minimum-mean square error approach, and in \cite{O2016}, relying on a strengthened version of Mrs.\ Gerber's Lemma. 
Entropy rates of HMPs formed by passing a binary Markov chain through an arbitrary memoryless channel was studied in \cite{LuoGuo2009}. 
Since the entropy rate of a HMP is closely related to the maximal Lyapunov exponent, a connection that was observed and discussed in \cite{JSS2008}, bounds on the entropy rate of a HMP can be deduced immediately from existing bounds on the maximal Lyapunov exponent, e.g., \cite{Lyapunov2021}.      
A review of entropy rate estimation can be found in \cite{Estimation}.

The remaining part of the paper is organized as follows. 
In Section \ref{SEC2}, we establish notation conventions.
In Section \ref{SEC3}, we review some preliminaries, provide motivation for this research, and formalize the main objectives of this work. 
In Section \ref{SEC4}, we provide and discuss the main results, and in Section \ref{SEC_Applications}, we exhibit the usefulness of our results via two specific quantized processes.
In the Appendixes, we prove our results.

\section{Notation Conventions} \label{SEC2}

Throughout the paper, random variables will be denoted by capital letters, realizations will be denoted by the corresponding lower case letters, and their alphabets will be denoted by calligraphic letters. 
Random vectors and their realizations will be denoted, 
respectively, by boldface capital and lower case letters. 
Their alphabets will be superscripted by their dimensions. 
The cumulative distribution function of a standard normal random variable is defined by
\begin{align} \label{DEF_PHI}
\Phi(t) = \int_{-\infty}^{t} \frac{1}{\sqrt{2\pi}} \exp\left\{-\frac{s^{2}}{2}\right\} ds.
\end{align}
The probability of an event $\calE$ will be denoted by $\prob \{\calE\}$, and the expectation operator with respect to a 
probability distribution $Q$ will be denoted by $\mathbb{E}_{Q} [\cdot]$, where the subscript will often be omitted.
The variance of a random variable $X$ is denoted by $\text{Var}[X]$. 

For a wide-sense stationary process $\{X_{n}\}_{n \geq 0}$ with mean $\mu_{\mbox{\tiny X}} = \Exp[X_{n}]$, the covariance function will be denoted by 
\begin{align}
R_{\mbox{\tiny X}}(k) = \Exp[(X_{n+k}-\mu_{\mbox{\tiny X}})(X_{n}-\mu_{\mbox{\tiny X}})],
\end{align}
and the PSD function is defined by
\begin{align} \label{Def_PSD}
\Phi_{\mbox{\tiny X}}(\lambda) 
= \sum_{k=-\infty}^{\infty} R_{\mbox{\tiny X}}(k) e^{i \lambda k},
\end{align} 
when $\{R_{\mbox{\tiny X}}(k)\}$ is absolutely summable.

\section{Preliminaries, Motivation, and Objectives} \label{SEC3}
\subsection{Preliminaries} 
In this paper, we will be using the natural logarithm in all of the definitions, and hence the units of entropy that we will be working with are nats. We include standard definitions in order that all notations be precisely defined.

\begin{definition}
	The Shannon entropy, $H(X)$, of a discrete random variable $X$ with probability mass function $P_{X}$, is defined as
	\begin{align} \label{DEF_Shannon_Entropy}
	H(X) = - \sum_{x \in \calX} P_{X}(x) \log P_{X}(x),
	\end{align}
	where $\calX$ is the (possibly infinite) alphabet of the random variable.
\end{definition}

The Shannon entropy can be extended into the multivariate case and hence to stochastic processes using the joint entropy for a collection of random variables.

\begin{definition}
	The joint entropy of a random vector $\bX = (X_{1},X_{2},\ldots,X_{n})$ with probability mass function $P_{\bX}$, is defined as 
\begin{align}
H(X_{1},X_{2},\ldots,X_{n}) = -\sum_{\bx \in \Omega} P_{\bX}(\bx) \log P_{\bX}(\bx),
\end{align}
where $\Omega = \calX_{1} \times \calX_{2} \times \ldots \times \calX_{n}$ is the support of the random vector, which is usually $\calX^{n}$.
\end{definition}

In this work, we also need the continuous extension of Shannon's joint entropy.  

\begin{definition}
	The joint differential entropy of a random vector $\bX = (X_{1},X_{2},\ldots,X_{n})$ with probability density function $f_{\bX}$, is defined as 
	\begin{align}
	h(X_{1},X_{2},\ldots,X_{n}) = -\int_{\Omega} f_{\bX}(\bx) \log f_{\bX}(\bx) \dint \bx,
	\end{align}
	where $\Omega \subseteq \reals^{n}$ is the support of the random vector.
\end{definition}

Finally, we define the concept of entropy rate, which can be thought of as the average amount of new information from each sample of a random variable in a discrete stochastic process.

\begin{definition}
	The entropy rate of a discrete stochastic process $\{X_{n}\}_{n \geq 1}$ is defined by
	\begin{align}
	\bar{H}(X) = \lim_{n \to \infty} \frac{H(X_{1},X_{2},\ldots,X_{n})}{n},
	\end{align}
	provided that the limit exists.
\end{definition} 
For stationary stochastic processes, the entropy rate is well defined \cite[Theorem 4.2.1]{Cover}.

Although this work mainly concerns discrete-valued stochastic processes, the two main applications that will be discussed in details in Section \ref{SEC_Applications} have a pure Gaussian mechanism, hence we will be needing the following definition.

\begin{definition}
	A stochastic process is called a Gaussian process if and only if every finite collection of random variables from the stochastic process has a multivariate Gaussian distribution. That is, for every $k \geq 1$ and every $(t_{1}, \ldots, t_{k}) \in \integers^{k}$,
	\begin{align}
	(X_{t_{1}}, \ldots, X_{t_{k}}) \sim \calN(\bmu,\bSigma),
	\end{align}
	where $\bmu$ is the vector of expected values and $\bSigma$ is the covariance matrix.
\end{definition}  

For a stationary Gaussian process with a PSD function $\Phi_{\mbox{\tiny X}}(\lambda)$, Kolmogorov showed that the differential entropy rate can be expressed as \cite[p.\ 417]{Cover}    
\begin{align} \label{Gaussian_Entropy_Rate}
\bar{H}(X) 
&= \frac{1}{2} \log (2 \pi e) 
+ \frac{1}{4\pi} \int_{0}^{2\pi} \log \Phi_{\mbox{\tiny X}}(\lambda) \dint\lambda.
\end{align}

The results in this work are based on a simple relation between the entropy of a discrete-valued random vector and the differential entropy of some related continuous-valued random vector.
Specifically, let $\bY_{n} = (Y_{1}, Y_{2}, \ldots, Y_{n})$ be a discrete-valued random vector. 
Let $\bU_{n} = (U_{1}, U_{2}, \ldots, U_{n})$ be a random vector with independent entries, such that $U_{i}$ is uniformly distributed in $[0,1)$ for any $1 \leq i \leq n$. Define the sum $\tilde{\bY}_{n} = \bY_{n} + \bU_{n}$. Then, the following result is going to be instrumental in the proofs of our main results.
\begin{lemma} \label{Basic_Lemma}
	It holds that $H(\bY_{n})=h(\tilde{\bY}_{n})$.
\end{lemma}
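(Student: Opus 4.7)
The plan is to compute $h(\tilde{\bY}_n)$ directly from its density and verify that the result coincides with the discrete entropy $H(\bY_n)$. Taking $\bY_n$ to be integer-valued (the generic case behind the statement, which covers the quantized applications appearing later in the paper), the crucial observation is that the dither $\bU_n$ lives in $[0,1)^n$ and therefore never crosses an integer boundary. Consequently, the shifted unit cubes $\by + [0,1)^n$ with $\by \in \integers^n$ are pairwise disjoint with unit Lebesgue measure, and $\bY_n$ is recovered from $\tilde{\bY}_n$ via the componentwise floor.

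First, I would write down the density of $\tilde{\bY}_n$ explicitly. Conditionally on $\bY_n = \by$, the vector $\tilde{\bY}_n = \by + \bU_n$ is uniform on $\by + [0,1)^n$, so that unconditionally
\begin{align}
f_{\tilde{\bY}_n}(\tilde{\by}) = P_{\bY_n}(\lfloor \tilde{\by} \rfloor), \qquad \tilde{\by} \in \reals^n,
\end{align}
where the floor is applied componentwise. This density is piecewise constant on the integer cells, which is exactly the structural feature that will make the differential-entropy integral collapse to a discrete sum.

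Next, I would substitute into the definition of $h(\tilde{\bY}_n)$ and partition the region of integration along the integer cells:
\begin{align}
h(\tilde{\bY}_n)
&= -\int_{\reals^n} f_{\tilde{\bY}_n}(\tilde{\by}) \log f_{\tilde{\bY}_n}(\tilde{\by}) \dint \tilde{\by} \nonumber \\
&= -\sum_{\by \in \integers^n} P_{\bY_n}(\by) \log P_{\bY_n}(\by) \int_{\by + [0,1)^n} \dint \tilde{\by} \nonumber \\
&= -\sum_{\by \in \integers^n} P_{\bY_n}(\by) \log P_{\bY_n}(\by) = H(\bY_n),
\end{align}
where the last step uses that each cube has unit volume. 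Exchanging sum and integral is justified by Tonelli's theorem on the nonnegative portion of the integrand, together with the standing assumption that $H(\bY_n)$ is finite (otherwise both sides are $+\infty$ and the identity is trivial).

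There is no deep obstacle here; the argument is really just bookkeeping around the product structure of the uniform dither. The only subtlety worth flagging is that the phrase ``discrete-valued'' must be interpreted so that distinct values of $\bY_n$ give rise to disjoint shifted cubes—this is automatic for integer-valued alphabets, which is the setting of the two examples studied in Section~\ref{SEC_Applications}, and is otherwise obtained by a trivial rescaling of $\bU_n$.
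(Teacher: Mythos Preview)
Your proof is correct and follows essentially the same approach as the paper: both arguments identify the density $f_{\tilde{\bY}_n}$ as piecewise constant on the unit cubes $\by+[0,1)^n$ (the paper denotes these $\calB(\by)$), equal there to $P_{\bY_n}(\by)$, and then reduce the differential-entropy integral to the discrete-entropy sum by partitioning $\reals^n$ into these cubes. The only cosmetic difference is direction---the paper starts from $H(\bY_n)$ and builds up to $h(\tilde{\bY}_n)$, whereas you start from $h(\tilde{\bY}_n)$ and collapse it to $H(\bY_n)$---and you add some justifications (Tonelli, the disjointness remark) that the paper leaves implicit.
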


\begin{proof}
	For $\by = (y_{1},y_{2},\ldots,y_{n}) \in \calY^{n}$, denote the hypercube
	\begin{align}
	\calB(\by) = [y_{1},y_{1}+1) \times [y_{2},y_{2}+1) \times \ldots \times [y_{n},y_{n}+1).
	\end{align}
	Now,
	\begin{align}
	H(\bY_{n}) 
	&= -\sum_{\by \in \calY^{n}} p(\by) \log p(\by) \\
	&= -\sum_{\by \in \calY^{n}} \int_{\calB(\by)} f_{\tilde{\bY}}(\bs) \log f_{\tilde{\bY}}(\bs) d\bs \\
	&= - \int_{\reals^{n}} f_{\tilde{\bY}}(\bs) \log f_{\tilde{\bY}}(\bs) d\bs \\
	\label{ref3}
	&= h(\tilde{\bY}_{n}),
	\end{align}
	since $f_{\tilde{\bY}}(\bs) = p(\by)$ for any $\bs \in \calB(\by)$.	
\end{proof}

\subsection{Motivation and Objectives}

Consider the following differential entropy bound on discrete entropy \cite[Problem 8.7]{Cover}
\begin{align} \label{Univariate_Bound}
H(X) \leq \frac{1}{2} \log \left[2 \pi e \left(\text{Var}(X) + \frac{1}{12}\right)\right],
\end{align}
which was attributed to the independent (unpublished) works of Massey and Willems. However, in 1975 Djackov \cite{Djackov} had already published the bound in connection with his work on coin-weighing. 
For some random variables, mainly with relatively large or infinite alphabets, the Shannon entropy does not admit a closed form expression, but still, their variance can relatively easy be calculated. In such cases, the bound in \eqref{Univariate_Bound} may be a good compromise between tightness and ease of calculation. As an example, consider a random variable $X$ with a Poisson distribution. The probability mass function of such a random variable is given by    
\begin{align} \label{Poisson_PMF}
P_{X}(k) = \frac{\lambda^{k}e^{-\lambda}}{k!},~~k=0,1,2,\ldots,
\end{align}
where $\lambda \in (0,\infty)$ is a given parameter. In this case, a direct substitution of \eqref{Poisson_PMF} into \eqref{DEF_Shannon_Entropy} yields
\begin{align}
H_{\mbox{\tiny Poisson}}(X) = \lambda[1-\log(\lambda)] + e^{-\lambda}\sum_{k=0}^{\infty} \frac{\lambda^{k}\log(k!)}{k!},
\end{align}  
which although lands itself to numerical evaluation, this expression is not so easy to study as a function of $\lambda$. However, the variance of the Poisson random variable is simply $\lambda$ and thus   
\begin{align} \label{Poisson_UpperBound}
H_{\mbox{\tiny Poisson}}(X) \leq \frac{1}{2} \log \left[2 \pi e \left(\lambda + \frac{1}{12}\right)\right].
\end{align}
As can be seen in Figure \ref{fig:PoissonGap}, for relatively large $\lambda$ values, the gap between the exact entropy and its upper bound is quite small.
\begin{figure}[ht!]
	\centering
	\begin{tikzpicture}[scale=1.3]
	\begin{axis}[
	disabledatascaling,
	scaled x ticks=false,
	xticklabel style={/pgf/number format/fixed,
		/pgf/number format/precision=3},
	scaled y ticks=false,
	yticklabel style={/pgf/number format/fixed,
		/pgf/number format/precision=3},
	xlabel={$\lambda$},
	xmin=0, xmax=10,
	ymin=0, ymax=2.6,
	legend pos=south east,
	]
	
	\addplot[smooth,color=black!30!green,thick]
	table[row sep=crcr] 
	{
0	0	\\
0.1	0.333676997	\\
0.2	0.535377973	\\
0.3	0.691143959	\\
0.4	0.819071204	\\
0.5	0.927637467	\\
0.6	1.021752461	\\
0.7	1.104597199	\\
0.8	1.178383216	\\
0.9	1.244724633	\\
1	1.304842242	\\
1.1	1.359684844	\\
1.2	1.41000589	\\
1.3	1.456414238	\\
1.4	1.499409066	\\
1.5	1.539404653	\\
1.6	1.576748421	\\
1.7	1.611734377	\\
1.8	1.644613311	\\
1.9	1.675600656	\\
2	1.704882644	\\
2.1	1.73262118	\\
2.2	1.758957749	\\
2.3	1.784016575	\\
2.4	1.807907199	\\
2.5	1.830726608	\\
2.6	1.852560991	\\
2.7	1.87348721	\\
2.8	1.893574033	\\
2.9	1.912883179	\\
3	1.931470198	\\
3.1	1.949385227	\\
3.2	1.966673638	\\
3.3	1.983376591	\\
3.4	1.999531514	\\
3.5	2.015172523	\\
3.6	2.030330777	\\
3.7	2.045034799	\\
3.8	2.05931075	\\
3.9	2.073182672	\\
4	2.0866727	\\
4.1	2.099801249	\\
4.2	2.112587185	\\
4.3	2.125047967	\\
4.4	2.137199781	\\
4.5	2.149057657	\\
4.6	2.160635572	\\
4.7	2.171946542	\\
4.8	2.183002707	\\
4.9	2.193815406	\\
5	2.204395243	\\
5.1	2.214752149	\\
5.2	2.224895434	\\
5.3	2.234833841	\\
5.4	2.244575584	\\
5.5	2.254128397	\\
5.6	2.263499563	\\
5.7	2.272695953	\\
5.8	2.281724051	\\
5.9	2.290589988	\\
6	2.299299563	\\
6.1	2.307858268	\\
6.2	2.316271308	\\
6.3	2.324543622	\\
6.4	2.332679899	\\
6.5	2.340684596	\\
6.6	2.348561953	\\
6.7	2.356316007	\\
6.8	2.363950602	\\
6.9	2.371469405	\\
7	2.378875915	\\
7.1	2.386173473	\\
7.2	2.39336527	\\
7.3	2.400454361	\\
7.4	2.407443665	\\
7.5	2.41433598	\\
7.6	2.421133986	\\
7.7	2.427840253	\\
7.8	2.434457243	\\
7.9	2.440987324	\\
8	2.447432767	\\
8.1	2.453795754	\\
8.2	2.460078386	\\
8.3	2.466282681	\\
8.4	2.472410583	\\
8.5	2.478463962	\\
8.6	2.484444621	\\
8.7	2.490354298	\\
8.8	2.496194668	\\
8.9	2.501967347	\\
9	2.507673896	\\
9.1	2.513315821	\\
9.2	2.518894576	\\
9.3	2.524411568	\\
9.4	2.529868157	\\
9.5	2.535265656	\\
9.6	2.540605338	\\
9.7	2.545888434	\\
9.8	2.551116135	\\
9.9	2.556289596	\\
10	2.561409935	\\
10.1	2.566478237	\\
10.2	2.571495552	\\
10.3	2.5764629	\\
10.4	2.581381269	\\
10.5	2.58625162	\\
10.6	2.591074883	\\
10.7	2.595851963	\\
10.8	2.60058374	\\
10.9	2.605271066	\\
11	2.609914773	\\
11.1	2.614515665	\\
11.2	2.619074527	\\
11.3	2.623592124	\\
11.4	2.628069195	\\
11.5	2.632506465	\\
11.6	2.636904636	\\
11.7	2.641264393	\\
11.8	2.645586402	\\
11.9	2.649871313	\\
12	2.654119759	\\
12.1	2.658332356	\\
12.2	2.662509706	\\
12.3	2.666652394	\\
12.4	2.670760993	\\
12.5	2.674836059	\\
12.6	2.678878137	\\
12.7	2.682887756	\\
12.8	2.686865437	\\
12.9	2.690811683	\\
13	2.694726988	\\
13.1	2.698611836	\\
13.2	2.702466696	\\
13.3	2.70629203	\\
13.4	2.710088285	\\
13.5	2.713855903	\\
13.6	2.717595312	\\
13.7	2.721306931	\\
13.8	2.724991173	\\
13.9	2.728648437	\\
14	2.732279118	\\
14.1	2.735883598	\\
14.2	2.739462254	\\
14.3	2.743015454	\\
14.4	2.746543557	\\
14.5	2.750046917	\\
14.6	2.753525878	\\
14.7	2.756980777	\\
14.8	2.760411947	\\
14.9	2.763819711	\\
15	2.767204387	\\
15.1	2.770566285	\\
15.2	2.773905711	\\
15.3	2.777222962	\\
15.4	2.780518333	\\
15.5	2.78379211	\\
15.6	2.787044574	\\
15.7	2.790276002	\\
15.8	2.793486664	\\
15.9	2.796676825	\\
16	2.799846746	\\
16.1	2.802996683	\\
16.2	2.806126885	\\
16.3	2.809237599	\\
16.4	2.812329066	\\
16.5	2.815401524	\\
16.6	2.818455204	\\
16.7	2.821490335	\\
16.8	2.824507141	\\
16.9	2.827505842	\\
17	2.830486655	\\
17.1	2.833449791	\\
17.2	2.836395459	\\
17.3	2.839323865	\\
17.4	2.842235209	\\
17.5	2.845129689	\\
17.6	2.8480075	\\
17.7	2.850868833	\\
17.8	2.853713875	\\
17.9	2.85654281	\\
18	2.859355822	\\
18.1	2.862153086	\\
18.2	2.864934781	\\
18.3	2.867701077	\\
18.4	2.870452144	\\
18.5	2.873188149	\\
18.6	2.875909257	\\
18.7	2.878615628	\\
18.8	2.881307422	\\
18.9	2.883984795	\\
19	2.886647901	\\
19.1	2.889296891	\\
19.2	2.891931913	\\
19.3	2.894553115	\\
19.4	2.897160641	\\
19.5	2.899754633	\\
19.6	2.902335231	\\
19.7	2.904902572	\\
19.8	2.907456791	\\
19.9	2.909998024	\\
20	2.9125264	\\
	};
	\legend{}
	\addlegendentry{$H_{\mbox{\tiny Poisson}}(X)$}	
	
	\addplot[smooth,color=black!20!purple,thick,dash pattern={on 3pt off 2pt}]
	table[row sep=crcr]
	{
0	0.176485208	\\
0.1	0.570713888	\\
0.2	0.788372924	\\
0.3	0.93951336	\\
0.4	1.055414167	\\
0.5	1.149440283	\\
0.6	1.228552285	\\
0.7	1.296840053	\\
0.8	1.356912209	\\
0.9	1.410534974	\\
1	1.458959887	\\
1.1	1.503106191	\\
1.2	1.543668963	\\
1.3	1.581186556	\\
1.4	1.616084437	\\
1.5	1.648704698	\\
1.6	1.679326511	\\
1.7	1.708180669	\\
1.8	1.735460161	\\
1.9	1.761327999	\\
2	1.785923121	\\
2.1	1.809364914	\\
2.2	1.831756715	\\
2.3	1.853188567	\\
2.4	1.873739405	\\
2.5	1.893478811	\\
2.6	1.912468435	\\
2.7	1.930763158	\\
2.8	1.948412049	\\
2.9	1.965459155	\\
3	1.981944165	\\
3.1	1.997902966	\\
3.2	2.013368116	\\
3.3	2.028369242	\\
3.4	2.042933378	\\
3.5	2.057085266	\\
3.6	2.070847603	\\
3.7	2.084241261	\\
3.8	2.097285479	\\
3.9	2.109998028	\\
4	2.122395357	\\
4.1	2.134492722	\\
4.2	2.146304295	\\
4.3	2.157843268	\\
4.4	2.169121942	\\
4.5	2.180151801	\\
4.6	2.190943587	\\
4.7	2.20150736	\\
4.8	2.211852557	\\
4.9	2.221988039	\\
5	2.23192214	\\
5.1	2.241662708	\\
5.2	2.251217139	\\
5.3	2.260592414	\\
5.4	2.269795127	\\
5.5	2.278831518	\\
5.6	2.287707491	\\
5.7	2.296428642	\\
5.8	2.305000281	\\
5.9	2.313427446	\\
6	2.321714929	\\
6.1	2.329867283	\\
6.2	2.337888846	\\
6.3	2.345783747	\\
6.4	2.353555924	\\
6.5	2.361209135	\\
6.6	2.368746966	\\
6.7	2.376172845	\\
6.8	2.383490049	\\
6.9	2.390701712	\\
7	2.397810837	\\
7.1	2.404820297	\\
7.2	2.41173285	\\
7.3	2.418551137	\\
7.4	2.425277696	\\
7.5	2.431914962	\\
7.6	2.438465274	\\
7.7	2.444930881	\\
7.8	2.451313946	\\
7.9	2.457616551	\\
8	2.463840698	\\
8.1	2.469988316	\\
8.2	2.476061265	\\
8.3	2.482061337	\\
8.4	2.48799026	\\
8.5	2.493849702	\\
8.6	2.499641273	\\
8.7	2.505366526	\\
8.8	2.511026964	\\
8.9	2.516624038	\\
9	2.522159149	\\
9.1	2.527633657	\\
9.2	2.533048872	\\
9.3	2.538406066	\\
9.4	2.543706469	\\
9.5	2.548951272	\\
9.6	2.554141631	\\
9.7	2.559278662	\\
9.8	2.564363452	\\
9.9	2.569397051	\\
10	2.574380481	\\
10.1	2.579314732	\\
10.2	2.584200764	\\
10.3	2.589039511	\\
10.4	2.59383188	\\
10.5	2.598578752	\\
10.6	2.603280981	\\
10.7	2.607939399	\\
10.8	2.612554817	\\
10.9	2.617128019	\\
11	2.621659772	\\
11.1	2.626150821	\\
11.2	2.630601889	\\
11.3	2.635013682	\\
11.4	2.639386888	\\
11.5	2.643722175	\\
11.6	2.648020196	\\
11.7	2.652281585	\\
11.8	2.656506962	\\
11.9	2.660696931	\\
12	2.66485208	\\
12.1	2.668972982	\\
12.2	2.673060198	\\
12.3	2.677114274	\\
12.4	2.681135744	\\
12.5	2.685125127	\\
12.6	2.689082931	\\
12.7	2.693009653	\\
12.8	2.696905776	\\
12.9	2.700771775	\\
13	2.704608111	\\
13.1	2.708415236	\\
13.2	2.712193591	\\
13.3	2.715943609	\\
13.4	2.719665711	\\
13.5	2.723360309	\\
13.6	2.727027807	\\
13.7	2.7306686	\\
13.8	2.734283073	\\
13.9	2.737871605	\\
14	2.741434566	\\
14.1	2.744972316	\\
14.2	2.748485211	\\
14.3	2.751973598	\\
14.4	2.755437815	\\
14.5	2.758878195	\\
14.6	2.762295065	\\
14.7	2.765688743	\\
14.8	2.769059543	\\
14.9	2.772407769	\\
15	2.775733724	\\
15.1	2.779037701	\\
15.2	2.782319988	\\
15.3	2.785580869	\\
15.4	2.788820622	\\
15.5	2.792039517	\\
15.6	2.795237822	\\
15.7	2.798415799	\\
15.8	2.801573704	\\
15.9	2.80471179	\\
16	2.807830303	\\
16.1	2.810929486	\\
16.2	2.814009578	\\
16.3	2.817070812	\\
16.4	2.820113418	\\
16.5	2.823137621	\\
16.6	2.826143642	\\
16.7	2.829131698	\\
16.8	2.832102004	\\
16.9	2.835054769	\\
17	2.837990198	\\
17.1	2.840908494	\\
17.2	2.843809856	\\
17.3	2.84669448	\\
17.4	2.849562556	\\
17.5	2.852414275	\\
17.6	2.855249821	\\
17.7	2.858069378	\\
17.8	2.860873123	\\
17.9	2.863661235	\\
18	2.866433885	\\
18.1	2.869191245	\\
18.2	2.871933482	\\
18.3	2.874660762	\\
18.4	2.877373246	\\
18.5	2.880071094	\\
18.6	2.882754464	\\
18.7	2.885423509	\\
18.8	2.888078383	\\
18.9	2.890719234	\\
19	2.89334621	\\
19.1	2.895959456	\\
19.2	2.898559116	\\
19.3	2.901145328	\\
19.4	2.903718233	\\
19.5	2.906277965	\\
19.6	2.90882466	\\
19.7	2.911358449	\\
19.8	2.913879463	\\
19.9	2.91638783	\\
20	2.918883675	\\
	};
	\addlegendentry{ME Bound}
	
	\end{axis}
	\end{tikzpicture}
	\caption{Plots of the entropy of a Poisson distributed random variable and its Gaussian maximum entropy (ME) upper bound in \eqref{Poisson_UpperBound} for $\lambda \in [0,10]$.}\label{fig:PoissonGap}
\end{figure}
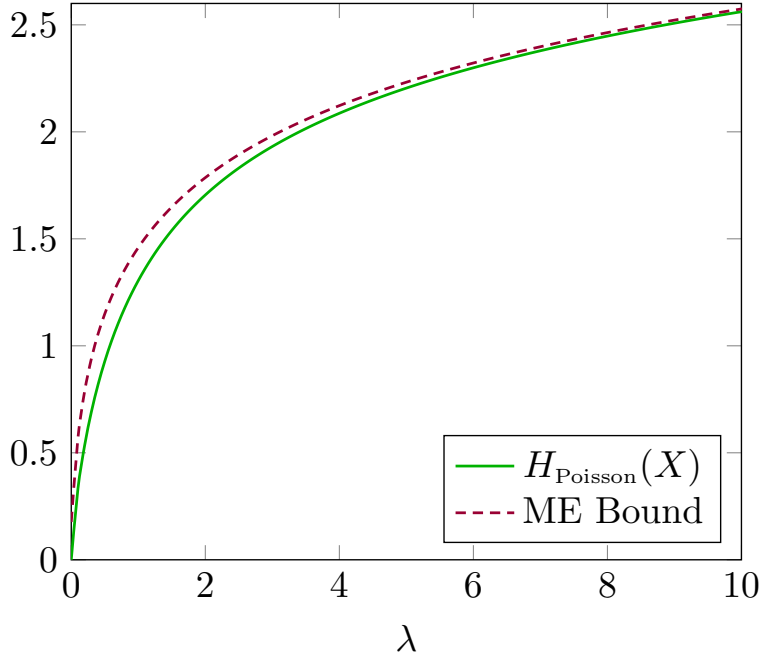

Since the upper bound \eqref{Univariate_Bound} in the univariate case seems to be relatively tight, at least in some cases, it seems very natural to generalize it and propose an upper bound on the entropy rate of discrete-valued stationary stochastic processes, which merely depends on the second-order statistics of the process.
Hence, the first objective of this work is to propose an upper bound on the entropy rates of discrete-valued stochastic processes in term of their PSD functions. 
In the same spirit, our second objective is to propose upper bounds on the entropy rate in terms of some finite collection of the covariance function of the stochastic process, i.e., bounds that merely depend on the set of second-order statistics $\{R_{\mbox{\tiny X}}(m)\}_{m=1}^{k}$ for some finite $k$.

\section{Main Results} \label{SEC4}

\subsection {Bounds via Gaussian Maximum-Entropy Principle}

The following result, which is proved in Appendix A, exhibits a generalization of the bound in \eqref{Univariate_Bound} to discrete-valued stationary stochastic processes. 

\begin{theorem} \label{Theorem_Gaussian}
	Let $\{Y_{n}\}_{n \geq 1}$ be a stationary process with a power spectral density function $\Phi_{\mbox{\tiny Y}}(\lambda)$.
	The entropy-rate of the process $\{Y_{n}\}_{n\geq1}$ is upper-bounded as
	\begin{align} \label{Gaussian_Upper_Bound}
	\bar{H}(Y)  
	\leq \frac{1}{2} \log (2 \pi e) 
	+ \frac{1}{4\pi} \int_{0}^{2\pi} \log \left(\Phi_{\mbox{\tiny Y}}(\lambda) + \frac{1}{12}\right) \dint\lambda.
	\end{align}
\end{theorem}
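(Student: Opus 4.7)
The plan is to use Lemma \ref{Basic_Lemma} to lift the problem to the differential-entropy setting and then apply the Gaussian maximum-entropy inequality in $n$ dimensions together with Szegő's first theorem on Toeplitz determinants.

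First, I would fix $n$ and introduce $\tilde{\bY}_n = \bY_n + \bU_n$ as in the hypotheses of Lemma \ref{Basic_Lemma}, so that $H(\bY_n) = h(\tilde{\bY}_n)$. Since $\bU_n$ has iid entries independent of $\bY_n$, uniform on $[0,1)$ (mean $1/2$, variance $1/12$), the covariance matrix of $\tilde{\bY}_n$ is
\begin{align}
\bSigma_{\tilde{\bY}_n} \;=\; \bSigma_{\bY_n} + \tfrac{1}{12}\,\bI_n,
\end{align}
where $\bSigma_{\bY_n}$ is the $n\times n$ Toeplitz matrix with entries $R_{\mbox{\tiny Y}}(i-j)$. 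Invoking the classical maximum-entropy bound $h(\bZ)\le \tfrac{1}{2}\log[(2\pi e)^n \det\mathrm{Cov}(\bZ)]$ for a real random vector with prescribed covariance, I get
\begin{align}
H(\bY_n) \;=\; h(\tilde{\bY}_n) \;\le\; \tfrac{n}{2}\log(2\pi e) \;+\; \tfrac{1}{2}\log\det\!\bigl(\bSigma_{\bY_n} + \tfrac{1}{12}\bI_n\bigr).
\end{align}

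Next, I would divide by $n$ and take $n\to\infty$. The matrix $\bSigma_{\bY_n}+\tfrac{1}{12}\bI_n$ is itself a Hermitian Toeplitz matrix whose symbol is precisely $\Phi_{\mbox{\tiny Y}}(\lambda)+\tfrac{1}{12}$ (since adding a constant to the symbol corresponds to adding a multiple of the identity to the Toeplitz matrix). Because $\{R_{\mbox{\tiny Y}}(k)\}$ is assumed absolutely summable (as in the definition \eqref{Def_PSD}), the symbol is a bounded continuous function of $\lambda$, and it is bounded away from $0$ by $1/12$, so $\log(\Phi_{\mbox{\tiny Y}}+1/12)$ is integrable. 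Hence the first Szegő limit theorem applies, yielding
\begin{align}
\lim_{n\to\infty}\frac{1}{n}\log\det\!\bigl(\bSigma_{\bY_n} + \tfrac{1}{12}\bI_n\bigr) \;=\; \frac{1}{2\pi}\int_{0}^{2\pi} \log\!\Bigl(\Phi_{\mbox{\tiny Y}}(\lambda)+\tfrac{1}{12}\Bigr) \dint\lambda.
\end{align}
Combining the last two displays with the definition of $\bar H(Y)$ delivers \eqref{Gaussian_Upper_Bound}.

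The main obstacle I foresee is the rigorous invocation of Szegő's theorem: one must verify that the symbol is non-negative (true since $\Phi_{\mbox{\tiny Y}}\ge 0$ as a PSD) and that its logarithm is integrable (true because of the strict positive floor $1/12$, which avoids the usual worry about symbols that vanish on a set of positive measure). If summability of the covariance is not assumed, one would need the more general Szegő–Grenander–Krein form, but the statement of the theorem implicitly uses the PSD, so absolute summability is the natural working hypothesis. The remaining steps — the identity $H(\bY_n)=h(\tilde{\bY}_n)$, the Gaussian maximum-entropy inequality, and the additivity of covariances under independent sums — are standard and should go through without difficulty.
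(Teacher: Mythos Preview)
Your proof is correct and follows the same overall architecture as the paper's: Lemma~\ref{Basic_Lemma} to pass to differential entropy, the Gaussian maximum-entropy bound on $h(\tilde{\bY}_n)$, and Szeg\H{o}'s theorem to evaluate the limiting normalized log-determinant. The one substantive difference is in how the determinant of $K_{\bY_n}+\tfrac{1}{12}I_n$ is handled. The paper invokes Fiedler's inequality on determinants of sums of Hermitian matrices to bound $\det(K_{\bY_n}+\tfrac{1}{12}I_n)\le\prod_i(\tau_{n,i}+\tfrac{1}{12})$, and then applies Szeg\H{o} to the eigenvalues $\tau_{n,i}$ of $K_{\bY_n}$. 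You instead observe that $K_{\bY_n}+\tfrac{1}{12}I_n$ is itself a Toeplitz matrix with symbol $\Phi_{\mbox{\tiny Y}}(\lambda)+\tfrac{1}{12}$, so Szeg\H{o} applies directly to its log-determinant. Your route is strictly cleaner: since the second summand is a scalar multiple of the identity, the eigenvalues of the sum are exactly $\tau_{n,i}+\tfrac{1}{12}$, so Fiedler's inequality is in fact an equality here and the paper's detour through it is unnecessary. Your remark that the $\tfrac{1}{12}$ floor guarantees integrability of the log-symbol is also a useful observation that the paper leaves implicit.
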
 

In many cases of interest, where the statistical structure may be quite complicated, second order statistics can still be evaluated in closed form and the bound in \eqref{Gaussian_Upper_Bound} may be beneficial.    

As a first example, consider the DMA model, which is formed by taking probabilistic mixtures of independent identically distributed (i.i.d.) discrete random variables \cite{Jacobs}.   
In order to specify the model, we need first the following definitions. 
Let $\{Y_{n}\}$ be a sequence of i.i.d.\ random variables taking values in some countable subset $\calF$ of the real line, with $\prob\{Y_{n}=i\} = \pi_{i}$ for all $i \in \calF$. Let $\{\Theta_{n}\}$ be a sequence of i.i.d.\ random variables taking values in the set $\{0,1,\ldots,L\}$ with PMF
\begin{align}
\prob\{\Theta_{n}=\ell\} = \delta_{\ell},~~\ell=0,1,\ldots,L,
\end{align} 
where $L \in \natur$.
The DMA($L$) process $\{S_{n}\}$ is then formed by
\begin{align}
S_{n}=Y_{n-\Theta_{n}},
\end{align}
i.e., $S_{n}$ is a probabilistic mixture of the $L+1$ i.i.d.\ random variables $Y_{n}, Y_{n-1}, \ldots, Y_{n-L}$.
The covariance function of the DMA($L$) process is given by 
\begin{align}
R_{\mbox{\tiny S}}(k) 
= \left\{   
\begin{array}{l l}
\sigma_{\mbox{\tiny Y}}^{2}  & \quad k=0   \\
\sigma_{\mbox{\tiny Y}}^{2} \sum_{j=0}^{L-|k|} \delta_{j}\delta_{j+|k|}   & \quad 1\leq|k|\leq L   \\
0                           & \quad |k|\geq L+1,   \\
\end{array} \right.  
\end{align}
where $\sigma_{\mbox{\tiny Y}}^{2}$ denotes the variance of $Y_{n}$. In this case, the PSD function is given by 
\begin{align}
\Phi_{\mbox{\tiny S}}(\lambda) 
= \sigma_{\mbox{\tiny Y}}^{2} \left[1 + 2\sum_{k=1}^{L} \left(\sum_{j=0}^{L-k} \delta_{j}\delta_{j+k}\right) \cos(\lambda k)\right],
\end{align} 
and thus, the entropy rate of $\{S_{n}\}$ is upper-bounded by the result of substituting $\Phi_{\mbox{\tiny S}}(\lambda)$ into \eqref{Gaussian_Upper_Bound}. 
It is important to note that DMA($L$) is not in general a Markov process, so the standard conditional entropy bounds are not tight.   

For a second example, consider the following MA model of order $m$:
\begin{align} \label{General_MA_Process}
X_{n} = \sum_{i=1}^{m} \theta_{i} U_{n-i} + U_{n}, 
\end{align}  
where $\theta_{1},\ldots,\theta_{m}$ are the parameters of the model and $\{U_{n}\}_{n \in \natur}$ are i.i.d.\ standard Gaussians. Although this is a continuous-valued process in general, we may rely on the reasonable assumption that every process in nature is sampled and quantized to some finite precision in the first place, which obviously yields a discrete-valued process. For the original process in \eqref{General_MA_Process}, its covariance function $R_{\mbox{\tiny X}}(k)$ equals to zero as long as $k \geq m+1$, and this fact remains true also for its quantized version. We elaborate more on the quantized MA process in Subsection \ref{SEC_Example1}.         


We continue by referring to two specific HMPs.
The entropy rate of the HMP is a long standing open problem and a closed-form expression for it is not known, even for the simplest binary cases. A comprehensive survey on statistical properties of HMPs, mainly from the information-theoretic view point can be found in \cite{MERHAV2002}.   
The two simple models defined below potentially involve relatively large or even infinite alphabets, which makes trivial upper bounds on the entropy rate like $H(Y_{1}|Y_{0})$ less attractive, while the PSD functions of the resulting processes can be easily derived, as will be evident.    

Let $\{X_{n}\}_{n\geq1}$ be an irreducible homogeneous Markov chain on the state-space $\{1,2,\ldots,m\}$, with transition probability matrix $\bGamma$. That is, $\bGamma = (\gamma_{ij})$, where for all states $i,j$ and times $n$:
\begin{align}
\gamma_{ij} = \prob(X_{n}=j|X_{n-1}=i).
\end{align}
By the irreducibility of $\{X_{n}\}_{n\geq1}$, there exists a unique, strictly positive, stationary distribution, which we shall denote by the vector $\bdelta = (\delta_{1}, \delta_{2}, \ldots, \delta_{m})$. Suppose throughout that $\{X_{n}\}_{n\geq1}$ is stationary, so that $\bdelta$ is the distribution of $X_{n}$ for all $n$. Now let the nonnegative integer-valued random process $\{Y_{n}\}_{n\geq1}$ be such that, conditional on $\{X_{n}\}_{n=1}^{N_{0}}$, for any $N_{0} \in \natur$, the random variables $\{Y_{n}\}_{n=1}^{N_{0}}$ are mutually independent and, if $X_{n}=i$, $Y_{n}$ takes the value $s$ with probability $\pi_{si}$. That is, for $n=1,\ldots,N_{0}$, the distribution of $Y_{n}$ conditional on $\{X_{n}\}_{n=1}^{N_{0}}$ is given by
\begin{align}
\prob(Y_{n}=s|X_{n}=i) = \pi_{si}.
\end{align}    
We shall refer to the probabilities $\pi_{si}$ as the state-dependent probabilities. The two cases we shall refer to are: (i) the conditional distribution of $Y_{n}$ is binomial; and (ii) the conditional distribution of $Y_{n}$ is Poisson.  

In case (i), if $X_{n}=i$, then $Y_{n}$ has a binomial distribution with parameters $N$ and $p_{i} \in [0,1]$, and the state-dependent probabilities are given for all integers $s=0,1,\ldots,N$ by:
\begin{align}
\pi_{si} = \binom{N}{s} p_{i}^{s}(1-p_{i})^{N-s}.
\end{align}
In case (ii), if $X_{n}=i$, then $Y_{n}$ has a Poisson distribution with mean $\lambda_{i}\geq 0$, and the state-dependent probabilities are given for all nonnegative integers $s$ by:
\begin{align}
\pi_{si} = e^{-\lambda_{i}} \lambda_{i}^{s}/s!.
\end{align} 
We will refer to the models just defined as binomial-hidden and Poisson-hidden Markov processes, respectively, as was originally termed in \cite[Sec.\ 2.3]{MacDonald}.
In order to arrive at relatively simple close-form expressions, at least for one of these models, we confine ourselves to the binary case $m=2$. In this case, we write the transition probability matrix of $\{X_{n}\}_{n\geq1}$ as 
\begin{align}
\bGamma = 
\begin{pmatrix}
1-\gamma_{1}   & \gamma_{1}     \\
\gamma_{2}    & 1-\gamma_{2}        
\end{pmatrix},
\end{align}  
and it follows that 
\begin{align}
\bdelta =
\frac{1}{\gamma_{1}+\gamma_{2}} 
\begin{pmatrix}
\gamma_{2}   & \gamma_{1}          
\end{pmatrix}.
\end{align}
The following expression for $\bGamma^{k}$, obtained by diagonalizing $\bGamma$, will be useful in deriving statistical properties of $\{Y_{n}\}_{n\geq1}$ when $m=2$:
\begin{align}
\bGamma^{k} = 
\begin{pmatrix}
\gamma_{1} & \gamma_{2}     \\
\gamma_{1} & \gamma_{2}        
\end{pmatrix} + \omega^{k} 
\begin{pmatrix}
\gamma_{2}  & -\gamma_{2}     \\
-\gamma_{1} & \gamma_{1}        
\end{pmatrix},
\end{align}  
where $\omega=1-\gamma_{1}-\gamma_{2}$. As a preliminary for deriving the covariance function and the PSD function of the binomial-hidden Markov process, we state two useful results. First, provided the relevant expectations exist,
\begin{align} \label{First_order_statistics}
\Exp[f(Y_{n})] = \sum_{i=1}^{m} \Exp[f(Y_{n})|X_{n}=i] \delta_{i}.
\end{align} 
This is proved by conditioning on $X_{n}$ and noting that $\prob(X_{n}=i)=\delta_{i}$. Second, provided again that the relevant expectations exist, we have for $k \in \mathbb{N}$ that 
\begin{align} \label{Second_order_statistics}
\Exp[f(Y_{n},Y_{n+k})] = \sum_{i,j=1}^{m} \Exp[f(Y_{n},Y_{n+k})|X_{n}=i,X_{n+k}=j] \delta_{i} \gamma_{ij}(k),
\end{align}
where $\gamma_{ij}(k) = (\bGamma^{k})_{ij}$.
To prove this, we condition on $\bX^{n+k}=\{X_{\ell}:~\ell=1,\ldots,n+k\}$ and exploit the fact that the conditional expectation of $f(Y_{n},Y_{n+k})$ given $\bX^{n+k}$, is the conditional expectation given only $X_{n}$ and $X_{n+k}$. Summing $P(X_{1},\ldots,X_{n+k})$ over the states at all times other than $n$ and $n+k$ gives $P(X_{n},X_{n+k})$, and \eqref{Second_order_statistics} follows since $\prob(X_{n}=i,X_{n+k}=j)=\delta_{i} \gamma_{ij}(k)$. 

We now turn to derive the covariance function of the binomial-hidden Markov process.
We use the notation $\bp=(p_{1}, p_{2}, \ldots, p_{m})$ and $\bP = \text{diag}(\bp)$. 
From \eqref{First_order_statistics} we have     
\begin{align}
\Exp[Y_{n}] = \sum_{i=1}^{m} (Np_{i}) \delta_{i} = N \bdelta \bp'
\end{align}
and 
\begin{align}
\Exp[Y_{n}^{2}] 
&= \sum_{i=1}^{m} (Np_{i}(1-p_{i}) + N^{2}p_{i}^{2}) \delta_{i} \\
&= N \bdelta \bp' + N(N-1)\bdelta \bP \bp'.
\end{align}
Hence
\begin{align}
\text{Var}(Y_{n})
&= N \bdelta \bp' + N(N-1)\bdelta \bP \bp' - N^{2}(\bdelta \bp')^{2} \\
\label{ref1aa}
&= N^{2}(\bdelta \bP \bp' - (\bdelta \bp')^{2}) + N(\bdelta \bp' - \bdelta \bP \bp').
\end{align}
From \eqref{Second_order_statistics} we have for any $k \in \mathbb{N}$ that
\begin{align}
\Exp[Y_{n}Y_{n+k}] = \sum_{i,j=1}^{m} (Np_{i})(Np_{j}) \delta_{i} \gamma_{ij}(k) = N^{2} \bdelta \bP \bGamma^{k} \bp'.
\end{align}
Thus the resulted covariance is 
\begin{align}
\label{ref2aa}
\text{Cov}(Y_{n},Y_{n+k}) = N^{2}(\bdelta \bP \bGamma^{k} \bp' - (\bdelta \bp')^{2}).
\end{align}
In the binary case $m=2$, i.e., if the Markov chain has only two states, the bracketed terms in \eqref{ref1aa} and \eqref{ref2aa} are given by
\begin{align}
\bdelta \bP \bp' - (\bdelta \bp')^{2} &= \delta_{1} \delta_{2} (p_{2}-p_{1})^{2} \dfn \alpha \\
\bdelta \bP \bGamma^{k} \bp' - (\bdelta \bp')^{2}
&= \delta_{1} \delta_{2} (p_{2}-p_{1})^{2} \omega^{k}, 
\end{align} 
where $\omega=1-\gamma_{1}-\gamma_{2}$, and
\begin{align}
\bdelta \bp' - \bdelta \bP \bp' = \delta_{1} p_{1}(1-p_{1}) + \delta_{2} p_{2}(1-p_{2}) \dfn \beta.
\end{align}
Hence, the covariance function of the binomial-hidden Markov process is given by
\begin{align}
R_{\mbox{\tiny Y}}(k)
= \left\{   
\begin{array}{l l}
N^{2} \alpha + N \beta    & \quad k=0   \\
N^{2} \alpha \omega^{|k|}  & \quad k\neq 0,   \\
\end{array} \right. 
\end{align}
and its PSD function is
\begin{align}
\Phi_{\mbox{\tiny Y}}(\lambda)
= N^{2}\alpha \cdot \frac{1 - \omega^{2}}{1 + \omega^{2} - 2\omega \cos(\lambda)} + N\beta.
\end{align}
Finally, the entropy rate of the binomial-hidden Markov process is upper-bounded as
\begin{align} \label{Upper_Bound_BHMM}
\bar{H}(Y) 
&\leq \frac{1}{2} \log (2 \pi e) 
+ \frac{1}{4\pi} \int_{0}^{2\pi} \log \left(\frac{N^{2}\alpha(1 - \omega^{2})}{1 + \omega^{2} - 2\omega \cos(\lambda)} + N\beta + \frac{1}{12}\right) \dint\lambda,
\end{align}
where the last integral may be evaluated numerically to any degree of precision. Deriving an upper bound on the entropy rate of the Poisson-hidden Markov process can be done in a very similar fashion.

Nonetheless, the bound given in Theorem \ref{Theorem_Gaussian} has at least one major drawback, which is the requirement of a complete knowledge of the PSD function of the process in question. Although the PSD function is well defined for any stationary process, it is not always possible to calculate it in closed form, mainly because its calculation requires knowing the entire covariance function, and this alone may be quite demanding in some cases (e.g., see the application in Subsection \ref{SEC_Example2}). 
Another possible scenario is when the entire covariance function is given, but the PSD function, which is its discrete-time Fourier transform, as defined in \eqref{Def_PSD}, cannot be calculated in closed-form. Such cases are somewhat less common in real-life models.   
Hence, upper bounds on the entropy rates in such cases, where one has only partial knowledge on the covariance function, may be quite beneficial. Such bounds are presented in the section to follow.

\subsection{Bounds via Gibbs' Inequality}

Suppose that $p(\by)$ and $q(\by)$ are two probability density functions on $\calS \subseteq \reals^{n}$. 
It is well known \cite[Theorem 8.6.1]{Cover} that the Kullback-Leibler divergence between the densities $p$ and $q$ is always non-negative. Then, it holds that 
\begin{align} \label{Gibbs}
- \int_{\calS} p(\by) \log p(\by) \dint\by
\leq - \int_{\calS} p(\by) \log q(\by) \dint\by,
\end{align}
which is an upper bound on the differential entropy of $p$. Hence, in light of the fact that $H(\bY_{n}) = h(\tilde{\bY}_{n})$ (Lemma \ref{Basic_Lemma} above), any distribution $q(\by)$ can potentially yield an upper bound on $H(\bY_{n})$. 
Two bounds are derived via this machinery, relying on the multivariate t-distribution. 
A random vector $\bY = (Y_{1},\ldots,Y_{n})$ follows a multivariate t-distribution if its density $q_{\mbox{\tiny t}}$ has the form   
\begin{align} \label{Student_t_distribution}
q_{\mbox{\tiny t}}(\by) 
= C_{n} |\bSigma|^{-1/2} \left[1+\frac{1}{\nu}(\by-\bmu)^{\mbox{\tiny T}}\bSigma^{-1}(\by-\bmu)\right]^{-(\nu+n)/2},
\end{align}
where $\bmu \in \reals^{n}$, $\bSigma$ is a positive-definite $n \times n$ matrix, $\nu$ is the degrees of freedom, and the normalizing factor is 
\begin{align}
\label{Normalizing}
C_{n} = \frac{\Gamma[(\nu+n)/2]}{\Gamma(\nu/2)\nu^{n/2}\pi^{n/2}},
\end{align}
where the Gamma function 
\begin{align}
\Gamma(s) = \int_{0}^{\infty} x^{s-1} e^{-x} \dint x,~~~s>0.
\end{align}
Using this multivariate distribution yields the following result, which is proved in Appendix B.

\begin{theorem} \label{Theorem_t_Distribution_Order_K}
	Let $\{Y_{n}\}_{n \geq 1}$ be a stationary process of covariance function $R_{\mbox{\tiny Y}}(k)$. The entropy-rate of the process $\{Y_{n}\}_{n\geq1}$ is upper-bounded as
	\begin{align}
	\bar{H}(Y) 
	\leq \inf_{\{\bbeta \in \mathbb{R}^{k}:~\sum_{m=1}^{k} |\beta_{m}| < 1 \}} \left\{ \frac{1}{2} \log \left(2\pi e \Sigma(\bbeta)\right)
	-\frac{1}{4\pi} \int_{0}^{2\pi} \log \Psi(\bbeta,\lambda) \dint\lambda \right\},
	\end{align}
	where,
	\begin{align}
	\label{Def_Sigma}
	\Sigma(\bbeta)= \left(R_{\mbox{\tiny Y}}(0) + \frac{1}{12}\right) + \sum_{m=1}^{k} \beta_{m} R_{\mbox{\tiny Y}}(m),
	\end{align}
	and 
	\begin{align}
	\label{Def_Psi}
	\Psi(\bbeta,\lambda) = 1 + \sum_{m=1}^{k} \beta_{m} \cos (m \lambda). 
	\end{align}
\end{theorem}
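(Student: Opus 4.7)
The plan is to combine Lemma~\ref{Basic_Lemma} with Gibbs' inequality~\eqref{Gibbs}, using a carefully constructed multivariate t-density as the reference. By Lemma~\ref{Basic_Lemma}, $\bar{H}(Y) = \lim_n h(\tilde{\bY}_n)/n$; applying \eqref{Gibbs} with $q = q_{\mbox{\tiny t}}$ from \eqref{Student_t_distribution}, centered at $\bmu = \Exp[\tilde{\bY}_n]$, with covariance $\bSigma_n$ (to be chosen) and degrees of freedom $\nu$, and linearizing the logarithm via $\log(1+x) \leq x$, one obtains the skeleton bound
\begin{align*}
h(\tilde{\bY}_n) \leq -\log C_n + \tfrac{1}{2}\log|\bSigma_n| + \tfrac{\nu+n}{2\nu}\,\text{tr}\!\left(\bSigma_n^{-1}\tilde{\bR}_n\right),
\end{align*}
where $[\tilde{\bR}_n]_{ij} = R_{\mbox{\tiny Y}}(|i-j|) + \tfrac{1}{12}\mathbbm{1}\{i=j\}$, since $\bU_n$ is independent of $\bY_n$ with i.i.d.\ components of variance $1/12$.

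The central step is to pick $\bSigma_n^{-1}$ as a rescaled banded Toeplitz matrix whose spectral symbol matches $\Psi(\bbeta,\lambda)$. Let $\bA_n(\bbeta)$ be the $n \times n$ Toeplitz matrix with $[\bA_n]_{ii}=1$ and $[\bA_n]_{ij}=\beta_{|i-j|}/2$ for $1\leq|i-j|\leq k$ (zero otherwise); the constraint $\sum_m|\beta_m|<1$ keeps $\Psi(\bbeta,\lambda)$ uniformly bounded strictly between $0$ and $2$, so $\bA_n(\bbeta)$ is positive-definite. Setting $\bSigma_n = \Sigma(\bbeta)\,\bA_n(\bbeta)^{-1}$, a band-by-band computation that uses definition~\eqref{Def_Sigma} yields
\begin{align*}
\text{tr}\!\left(\bSigma_n^{-1}\tilde{\bR}_n\right) = \tfrac{1}{\Sigma(\bbeta)}\,\text{tr}\!\left(\bA_n(\bbeta)\,\tilde{\bR}_n\right) = n - \tfrac{1}{\Sigma(\bbeta)}\sum_{m=1}^k m\beta_m R_{\mbox{\tiny Y}}(m) = n + O(1),
\end{align*}
while $\log|\bSigma_n| = n\log\Sigma(\bbeta) - \log\det\bA_n(\bbeta)$.

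The asymptotics follow from two classical tools. Szeg\H{o}'s first limit theorem for Toeplitz determinants (applicable since $\Psi$ is continuous and bounded away from zero) gives $n^{-1}\log\det\bA_n(\bbeta) \to (2\pi)^{-1}\int_0^{2\pi}\log\Psi(\bbeta,\lambda)\,\dint\lambda$, while Stirling's expansion together with $\Gamma((\nu+n)/2)/\Gamma(\nu/2)\sim(\nu/2)^{n/2}$ as $\nu\to\infty$ yields $-n^{-1}\log C_n\to \tfrac{1}{2}\log(2\pi e) - \tfrac{1}{2}$ and $(\nu+n)/(2\nu)\to 1/2$. Choosing $\nu=\nu_n\to\infty$ with $n/\nu_n\to 0$ (e.g.\ $\nu_n=n^2$) makes all residual cross-terms vanish per letter, leaving
\begin{align*}
\tfrac{1}{n}\,h(\tilde{\bY}_n) \leq \tfrac{1}{2}\log\!\left(2\pi e\,\Sigma(\bbeta)\right) - \tfrac{1}{4\pi}\int_0^{2\pi}\log\Psi(\bbeta,\lambda)\,\dint\lambda + o(1),
\end{align*}
valid for every admissible $\bbeta$; taking $n\to\infty$ and then the infimum over $\bbeta$ yields the theorem. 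The main obstacle I anticipate is the careful joint $(n,\nu)$-asymptotics: one must verify that the Stirling expansion of $-\log C_n$ and the prefactor $(\nu+n)/(2\nu)$ combine so that the $O(n/\nu)$ residuals cancel against the $O(n/\nu)$ term produced by $(\nu+n)/(2\nu)\,\text{tr}(\bSigma_n^{-1}\tilde{\bR}_n)$. Verifying the Szeg\H{o} hypothesis reduces precisely to the uniform positivity of $\Psi$ guaranteed by the $\ell^1$-ball constraint on $\bbeta$, and the remaining work is bookkeeping that exploits the fixed band width $k$ to show the edge contribution to the trace is $O(1)$ uniformly in $n$.
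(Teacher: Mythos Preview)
Your proposal is correct and reaches the same bound, but the route differs from the paper's in two places worth noting.

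First, where you linearize via $\log(1+x)\le x$ and then send $\nu=\nu_n\to\infty$ with $n/\nu_n\to 0$, the paper instead applies Jensen's inequality to the concave logarithm (obtaining $\Exp\log[1+\tfrac{1}{\nu}Q]\le\log[1+\tfrac{1}{\nu}\Exp Q]$) and keeps $\nu$ \emph{fixed} throughout. After Stirling-type bounds on $\log\Gamma$ and the $n\to\infty$ limit, the $\nu$-dependence cancels exactly, so no joint $(n,\nu)$ asymptotics are needed. This sidesteps precisely the obstacle you flag. Conversely, your $\nu\to\infty$ limit is effectively the Gaussian reference density; you could have plugged a multivariate normal into Gibbs' inequality directly and avoided the Gamma-function bookkeeping altogether.

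Second, the paper parametrizes $\bSigma^{-1}$ as a banded Toeplitz matrix with free entries $\alpha_0,\alpha_1,\dots,\alpha_k$, bounds the quadratic form by $n f(\balpha)$ (discarding the negative edge correction rather than tracking it), and only at the end reduces to $\beta_m=2\alpha_m/\alpha_0$. Your normalization $\bSigma_n=\Sigma(\bbeta)\,\bA_n(\bbeta)^{-1}$ is a cleaner choice because it makes the trace equal to $n+O(1)$ from the outset and lands directly on the $\bbeta$-parametrization; the paper arrives there only after a page of simplification. Both arguments invoke Szeg\H{o}'s theorem for the determinant term, and the diagonal-dominance constraint $\sum_m|\beta_m|<1$ plays the same role in both: ensuring positive-definiteness of $\bSigma^{-1}$ (paper) and uniform positivity of the symbol $\Psi$ (your version).
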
 

In comparison to \eqref{Gaussian_Upper_Bound}, where we just had to plug-in the PSD function and evaluate a finite-interval one-dimensional integral (which numerically takes only a fraction of a second), here, the situation is somewhat different. The partial knowledge of the second order statistics of the process has its penalty; the single-letter upper bound is more complicated to calculate: in addition to a one-dimensional integral, we also need to solve a minimization problem. This, however, can be done using numerical optimization methods. When we are able to rely on more second order statistics of the process in question, the dimension of the minimization problem increases, and the resulted upper bound is tighter.       

In Subsection \ref{SEC_Example2} below, we provide an example for a simple discrete-valued stochastic process, where its covariance function is given by a relatively cumbersome expression (a double infinite sum over an infinite integral, as can be seen in \eqref{AR_Covariance}). 
Hence, the calculation of the bound in \eqref{Gaussian_Upper_Bound} may be extremely exhausting, since it involves the calculation of an integral over the PSD function, which, in turn, is given by an infinite sum (the discrete-time Fourier transform) over the covariance function. This practically prevents the use of Theorem \ref{Theorem_Gaussian}. 
In this case, we rely on Theorem \ref{Theorem_t_Distribution_Order_K} to derive upper bounds on the entropy rate. As will be seen in Subsection \ref{SEC_Example2}, relying on more statistics yields a tighter upper bound on the entropy rate.       

Since the numerical problem involved in the bound of Theorem \ref{Theorem_t_Distribution_Order_K} is more demanding than the one in the bound of Theorem \ref{Theorem_Gaussian}, mainly due to the need to perform both numerical integration as well as numerical optimization (the bound \eqref{Gaussian_Upper_Bound} requires numerical integration only), we also propose a lighter version of it, which is proved in Appendix C. Although the bound in Theorem \ref{Theorem_t_Distribution_Order_1} below relies only on the variance and the covariance between two consecutive variables in the process, this bound may still be beneficial when those are the only available statistics. 

\begin{theorem} \label{Theorem_t_Distribution_Order_1}
	The entropy-rate of a stationary process $\{Y_{n}\}_{n\geq1}$ is upper-bounded as
	\begin{align} 
	\bar{H}(Y) 
	\leq \inf_{s \in (-1,1)} \frac{1}{2} \log \left(4\pi e \frac{\left(R_{\mbox{\tiny Y}}(0) + \tfrac{1}{12}\right) + s R_{\mbox{\tiny Y}}(1)}{1 + \sqrt{1-s^{2}}}\right).
	\end{align}
\end{theorem}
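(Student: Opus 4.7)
The plan is to derive Theorem \ref{Theorem_t_Distribution_Order_1} simply as the $k=1$ specialization of Theorem \ref{Theorem_t_Distribution_Order_K}, the key point being that when $k=1$ the one-dimensional integral $\int_0^{2\pi}\log\Psi(\beta,\lambda)\,\dint\lambda$ admits a closed-form evaluation, so numerical quadrature becomes unnecessary. Concretely, I would set $k=1$ and rename the scalar $\beta_1$ as $s$, so that $\Sigma(s) = R_{\mbox{\tiny Y}}(0) + 1/12 + s\,R_{\mbox{\tiny Y}}(1)$ and $\Psi(s,\lambda) = 1 + s\cos\lambda$, with the feasibility set $|s|<1$ inherited directly from the constraint $\sum_{m=1}^k|\beta_m|<1$.

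The main analytic step is to establish the classical identity
\begin{align}
\int_0^{2\pi}\log(1 + s\cos\lambda)\,\dint\lambda \;=\; 2\pi\log\frac{1+\sqrt{1-s^2}}{2}, \qquad |s|<1.
\end{align}
I would prove this in one of two standard ways. The first option is to differentiate the left-hand side in $s$, which yields $\int_0^{2\pi}\cos\lambda/(1+s\cos\lambda)\,\dint\lambda$; writing this as $(2\pi - \int_0^{2\pi}\dint\lambda/(1+s\cos\lambda))/s$ and using the classical Poisson integral $\int_0^{2\pi}\dint\lambda/(1+s\cos\lambda) = 2\pi/\sqrt{1-s^2}$, one integrates back in $s$ from $0$ and matches the boundary value $0$ at $s=0$. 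The second option is the complex-analytic route: factor
\begin{align}
1 + s\cos\lambda \;=\; \frac{s}{2}\,e^{-i\lambda}\bigl(e^{i\lambda}-\alpha_1\bigr)\bigl(e^{i\lambda}-\alpha_2\bigr),
\end{align}
where $\alpha_1\alpha_2=1$ with $|\alpha_1|=s/(1+\sqrt{1-s^2})<1<|\alpha_2|$, and then invoke Jensen's formula $\int_0^{2\pi}\log|1 - \alpha e^{i\lambda}|\,\dint\lambda = 2\pi\log^{+}|\alpha|$ on each factor. Either approach produces the claimed closed form.

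Substituting into the bound of Theorem \ref{Theorem_t_Distribution_Order_K} with $k=1$ gives
\begin{align}
\bar{H}(Y) \;\leq\; \inf_{s\in(-1,1)}\left\{ \frac{1}{2}\log\bigl(2\pi e\,\Sigma(s)\bigr) - \frac{1}{2}\log\frac{1+\sqrt{1-s^2}}{2}\right\} \;=\; \inf_{s\in(-1,1)}\frac{1}{2}\log\!\left(\frac{4\pi e\,\Sigma(s)}{1+\sqrt{1-s^2}}\right),
\end{align}
which is precisely the claim after substituting the definition of $\Sigma(s)$. The main obstacle is really only the evaluation of the trigonometric integral; everything else is bookkeeping. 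A minor technical point worth checking is that $\Psi(s,\lambda) = 1 + s\cos\lambda > 0$ uniformly on $[0,2\pi]$ when $|s|<1$, so that the logarithm and all manipulations above are well-defined on the entire feasible region of $s$.
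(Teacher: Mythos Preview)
Your proposal is correct, and in fact cleaner than what the paper does. You derive Theorem~\ref{Theorem_t_Distribution_Order_1} as the $k=1$ specialization of Theorem~\ref{Theorem_t_Distribution_Order_K}, reducing the task to the closed-form evaluation of $\int_0^{2\pi}\log(1+s\cos\lambda)\,\dint\lambda$; this identity is precisely the paper's Lemma~\ref{Lemma_Poisson} (stated in an algebraically equivalent form), though the paper invokes that lemma only in the applications section, not in the proof of Theorem~\ref{Theorem_t_Distribution_Order_1}. The paper instead reruns the whole Gibbs-inequality/$t$-distribution argument from scratch with a tridiagonal $\bSigma^{-1}$, computes $\det(\bSigma^{-1})$ explicitly via the three-term recursion $\phi_{n+2}=\alpha\phi_{n+1}-\beta^2\phi_n$, solves it in closed form, and then takes the $n\to\infty$ limit directly, thereby bypassing Szeg\H{o}'s theorem for this special case. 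Your route is shorter and exhibits Theorem~\ref{Theorem_t_Distribution_Order_1} transparently as a corollary; the paper's route is more self-contained and shows that for $k=1$ the asymptotics can be done by elementary means without appealing to Szeg\H{o}. Both arrive at the same bound.
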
 

The bounds in Theorems \ref{Theorem_t_Distribution_Order_K} and \ref{Theorem_t_Distribution_Order_1} stems from Gibbs' inequality and the multivariate t-distribution. Still, many more upper bounds may be derived using similar techniques, where the main idea is to plug-in some nice probability distribution into the right-hand-side of \eqref{Gibbs} and then to calculate its expectation with respect to the real statistics of the process. Relatively beneficial may be the family of elliptical distributions. An elliptical distribution with a probability density function $q$ has the form\footnote{In a more general settings, the family of elliptical distributions are defined via the characteristic function \cite{Cambanis}. In our settings, we seek for probability distributions with well defined densities.}
\begin{align}
q(\by) = k \cdot g((\by-\bmu)^{\mbox{\tiny T}}\bSigma^{-1}(\by-\bmu)),
\end{align}  
where $k$ is the normalizing factor, $\bmu$ is the mean vector, and $\bSigma$ is a positive definite matrix which is proportional to the covariance matrix. This family includes, among others, the multivariate normal distribution, the multivariate t-distribution as in \eqref{Student_t_distribution}, and the multivariate Laplace distribution, just to name a few.
Another example, which does not belong to the elliptical family of distributions, and that may lead to more useful bounds on the entropy rate is the multivariate log-normal distribution. 
A random vector $\bY = (Y_{1},\ldots,Y_{n})$ follows a multivariate log-normal distribution if the density function of $\bY$ is defined by   
\begin{align}
q_{\mbox{\tiny log}}(\by) 
= (2\pi)^{-n/2} |\bSigma|^{-1/2} \left(\prod_{i=1}^{n} y_{i} \right)^{-1}
\exp\left\{-\tfrac{1}{2}(\log\by-\bmu)^{\mbox{\tiny T}}\bSigma^{-1}(\log\by-\bmu)\right\},
\end{align}
with $\log\by \dfn (\log y_{1},\ldots,\log y_{n})^{\mbox{\tiny T}}$, $y_{i}>0$, $i=1,\ldots,n$, $\bmu \in \reals^{n}$, and $\bSigma$ is a positive-definite $n \times n$ matrix. 
This multivariate distribution is still relatively easy to handle but the resulted bounds are somewhat more cumbersome than the bounds given in Theorem \ref{Theorem_t_Distribution_Order_K}, thus will not be presented here in details. 

\section{Applications to Quantized Processes} \label{SEC_Applications}

\subsection{The Quantized Moving-Average Process} \label{SEC_Example1}

We start by demonstrating the usefulness of the upper bound in Theorem \ref{Theorem_Gaussian} by referring to a quantized version of a simple MA process.
Let $\{W_{n}\}_{n \in \integers}$ be white noise, i.e., i.i.d.\ with $W_{n} \sim \calN(0,\sigma^{2})$, and for $\theta \in \mathbb{R}$, let $\{X_{n}\}_{n \in \integers}$ be a first-order MA process, defined by
\begin{align}
X_{n} = W_{n} + \theta W_{n-1}.
\end{align}
This process is Gaussian and stationary for any $\theta \in \mathbb{R}$, and its covariance function is given by
\begin{align}
R_{\mbox{\tiny X}}(k)
= \left\{   
\begin{array}{l l}
\sigma^{2}(1 + \theta^{2})  & \quad k=0   \\
\sigma^{2} \theta           & \quad k= \pm 1   \\
0                           & \quad \text{else}   \\
\end{array} \right. .
\end{align}
The PSD function is
\begin{align}
\Phi_{\mbox{\tiny X}}(\lambda) 
= \sigma^{2}(1 + \theta^{2}) + 2 \sigma^{2} \theta \cos(\lambda),
\end{align} 
and the differential entropy rate of the process $\{X_{n}\}_{n \in \integers}$ is given by 
\begin{align}
\bar{H}(X) 
&= \frac{1}{2} \log (2 \pi e) 
+ \frac{1}{4\pi} \int_{0}^{2\pi} \log \Phi_{\mbox{\tiny X}}(\lambda) \dint\lambda \\
&= \frac{1}{2} \log (2 \pi e) 
+ \frac{1}{4\pi} \int_{0}^{2\pi} \log [\sigma^{2}(1 + \theta^{2}) + 2 \sigma^{2} \theta \cos(\lambda)] \dint\lambda \\
&= \frac{1}{2} \log (2 \pi e \sigma^{2}) 
+ \frac{1}{4\pi} \int_{0}^{2\pi} \log (1 + \theta^{2} + 2 \theta \cos(\lambda)) \dint\lambda \\
&= \frac{1}{2} \log (2 \pi e \sigma^{2}).
\end{align}
Define the quantization function $\mathsf{Q}(\cdot)$ by
\begin{align} \label{DEF_QUAN}
\mathsf{Q}(s) = \operatorname*{arg\,min}_{m \in \mathbb{Z}} |s-m|,
\end{align}
and define the quantized MA process $\{Y_{n}\}$ by $Y_{n} = \mathsf{Q}(X_{n})$ at any time $n$. In general, if $\{X_{n}\}_{n \geq 1}$ is strongly stationary and $Y_{n} = f(X_{n})$, for any function $f(\cdot)$, then $\{Y_{n}\}_{n \geq 1}$ is also strongly stationary. 
Hence, the quantized MA process is strongly stationary and its PSD function is well defined. We start by calculating the covariance function. For any $m \geq n+2$, note that $\mathsf{Q}(W_{m} + \theta W_{m-1})$ and $\mathsf{Q}(W_{n} + \theta W_{n-1})$ are independent, thus $R_{\mbox{\tiny Y}}(k)=0$ for any $|k|\geq 2$. For $R_{\mbox{\tiny Y}}(0)$, consider the following. Let $Y \sim \calN(0,\sigma_{0}^{2})$ and note that $\Exp[\mathsf{Q}(Y)]=0$. The second moment is given by   
\begin{align}
\Exp[\mathsf{Q}(Y)^{2}] 
&= \sum_{k=-\infty}^{\infty} k^{2} \prob \left\{Y \in [k-\tfrac{1}{2},k+\tfrac{1}{2}) \right\} \\
&= \sum_{k=1}^{\infty} 2k^{2} \prob \left\{Y \in [k-\tfrac{1}{2},k+\tfrac{1}{2}) \right\} \\
\label{Second_Moment_Quan_Gauss}
&= \sum_{k=1}^{\infty} 2k^{2} \left[\Phi\left(\frac{k+\frac{1}{2}}{\sigma_{0}}\right) - \Phi\left(\frac{k-\frac{1}{2}}{\sigma_{0}}\right)\right],
\end{align}
and hence
\begin{align}
R_{\mbox{\tiny Y}}(0)
&=\Exp[\mathsf{Q}(W_{n} + \theta W_{n-1})^{2}] \\
&= \sum_{k=1}^{\infty} 2k^{2} \left[\Phi\left(\frac{k+\frac{1}{2}}{\sigma\sqrt{1 + \theta^{2}}}\right) - \Phi\left(\frac{k-\frac{1}{2}}{\sigma\sqrt{1 + \theta^{2}}}\right)\right] \\
&\dfn F(\sigma,\theta).
\end{align}
The term $R_{\mbox{\tiny Y}}(1)$ is given by
\begin{align}
R_{\mbox{\tiny Y}}(1)
=\Exp[\mathsf{Q}(W_{n+1} + \theta W_{n}) \mathsf{Q}(W_{n} + \theta W_{n-1})],
\end{align}
and note that conditioned on $W_{n}=s$, the independence of $W_{n-1}$ and $W_{n+1}$ implies that 
\begin{align} \label{ref6}
R_{\mbox{\tiny Y}}(1)
= \int_{-\infty}^{\infty} \Exp[\mathsf{Q}(W_{n+1} + \theta s)] \Exp[\mathsf{Q}(s + \theta W_{n-1})] \tfrac{1}{\sqrt{2 \pi \sigma^{2}}} e^{-s^{2}/2\sigma^{2}} \mathrm{d}s.
\end{align} 
For $Z \sim \calN(\mu_{0},\sigma_{0}^{2})$, we have that
\begin{align}
\Exp[\mathsf{Q}(Z)] 
&= \sum_{m=-\infty}^{\infty} m \prob \left\{Z \in [m-\tfrac{1}{2},m+\tfrac{1}{2}) \right\} \\
\label{ref2}
&= \sum_{m=-\infty}^{\infty} m \left[\Phi\left(\frac{m - \mu_{0}+\frac{1}{2}}{\sigma_{0}}\right) - \Phi\left(\frac{m - \mu_{0}-\frac{1}{2}}{\sigma_{0}}\right)\right],
\end{align}
and thus, the two expectations inside the integral in \eqref{ref6} are given by
\begin{align}
\Exp[\mathsf{Q}(W_{n+1} + \theta s)]
&= \sum_{k=-\infty}^{\infty} k \left[\Phi\left(\frac{k - \theta s+\frac{1}{2}}{\sigma}\right) - \Phi\left(\frac{k - \theta s-\frac{1}{2}}{\sigma}\right)\right] \\
\label{ref8}
&\dfn \sum_{k=-\infty}^{\infty} k I_{0}(k,s,\sigma,\theta)  \\
\Exp[\mathsf{Q}(\theta W_{n-1} + s)]
&= \sum_{\ell=-\infty}^{\infty} \ell \left[\Phi\left(\frac{\ell - s+\frac{1}{2}}{\theta \sigma}\right) - \Phi\left(\frac{\ell - s-\frac{1}{2}}{\theta \sigma}\right)\right] \\
\label{ref9}
&\dfn \sum_{\ell=-\infty}^{\infty} \ell J_{0}(\ell,s,\sigma,\theta).
\end{align}
Substituting \eqref{ref8} and \eqref{ref9} back into \eqref{ref6} yields
\begin{align} 
R_{\mbox{\tiny Y}}(1)
&= \int_{-\infty}^{\infty} \left(\sum_{k=-\infty}^{\infty} k I_{0}(k,s,\sigma,\theta)\right) \left(\sum_{\ell=-\infty}^{\infty} \ell J_{0}(\ell,s,\sigma,\theta)\right) \tfrac{1}{\sqrt{2 \pi \sigma^{2}}} e^{-s^{2}/2\sigma^{2}} \mathrm{d}s \\
&= \sum_{k=-\infty}^{\infty} \sum_{\ell=-\infty}^{\infty} k \ell \int_{-\infty}^{\infty}  I_{0}(k,s,\sigma,\theta)  J_{0}(\ell,s,\sigma,\theta) \tfrac{1}{\sqrt{2 \pi \sigma^{2}}} e^{-s^{2}/2\sigma^{2}} \mathrm{d}s \\
&\dfn G(\sigma,\theta).
\end{align}
Now, the PSD function is given by
\begin{align}
\Phi_{\mbox{\tiny Y}}(\lambda) = F(\sigma,\theta) + 2 G(\sigma,\theta) \cos(\lambda),
\end{align}
and according to Theorem \ref{Theorem_Gaussian}, the entropy rate of the process $\{Y_{n}\}$ is upper-bounded by 
\begin{align}
\bar{H}(Y) 
&\leq \frac{1}{2} \log (2 \pi e) 
+ \frac{1}{4\pi} \int_{0}^{2\pi} \log \left(\Phi_{\mbox{\tiny Y}}(\lambda) + \tfrac{1}{12}\right) \dint\lambda \\
&= \frac{1}{2} \log (2 \pi e) 
+ \frac{1}{4\pi} \int_{0}^{2\pi} \log \left(F(\sigma,\theta) + \tfrac{1}{12} + 2 G(\sigma,\theta) \cos(\lambda) \right) \dint\lambda \\
\label{ref13}
&= \frac{1}{2} \log\left[2 \pi e(F(\sigma,\theta) + \tfrac{1}{12})\right] 
+ \frac{1}{4\pi} \int_{0}^{2\pi} \log \left(1 + \frac{2 G(\sigma,\theta)}{F(\sigma,\theta) + \tfrac{1}{12}} \cos(\lambda) \right) \dint\lambda.
\end{align}
To evaluate the integral in \eqref{ref13}, we use the following result, which is proved in Appendix D.  
\begin{lemma} \label{Lemma_Poisson}
	For any $s \in [-1,1]$, it holds that
	\begin{align}
	\frac{1}{2\pi} \int_{0}^{2\pi} \log \left(1+ s \cos(\lambda)\right) \dint\lambda
	= \left\{   
	\begin{array}{l l}
	-\log \left(\frac{2-2\sqrt{1-s^{2}}}{s^{2}}\right)  & \quad s \in [-1,0)\cup(0,1]   \\
	0           & \quad s=0  \\
	\end{array} \right. .
	\end{align}
\end{lemma}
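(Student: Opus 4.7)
The plan is to factor $1 + s\cos\lambda$ as a quadratic in $z = e^{i\lambda}$ and invoke the classical identity $\frac{1}{2\pi}\int_0^{2\pi} \log|e^{i\lambda}-a|\,\dint\lambda = \max(0, \log|a|)$, which follows from Jensen's formula (or, more elementarily, from the Taylor expansion of $\log(1-z)$ inside the unit disk). The case $s=0$ is immediate from $\log 1 = 0$. Because the substitution $\lambda \mapsto \lambda + \pi$ shows that $\int_0^{2\pi}\log(1+s\cos\lambda)\,\dint\lambda = \int_0^{2\pi}\log(1-s\cos\lambda)\,\dint\lambda$, the integral is an even function of $s$, and it suffices to treat $s \in (0,1]$.

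For such $s$, I would write
\begin{align}
1+s\cos\lambda = \frac{s}{2\,e^{i\lambda}}\,(e^{i\lambda}-\alpha)\,(e^{i\lambda}-\beta),
\end{align}
where $\alpha,\beta = (-1\pm\sqrt{1-s^2})/s$ are the roots of $z^2 + (2/s)z + 1$. A direct check gives $\alpha\beta = 1$, so the two roots are reciprocals, and a one-line inequality (reducing to $1-s < \sqrt{(1-s)(1+s)}$) shows $|\alpha| < 1 < |\beta|$ for $s \in (0,1)$. Taking absolute values on both sides (which is legitimate because $1+s\cos\lambda > 0$ for $|s|<1$) and integrating term by term, the $|e^{i\lambda}|$ and $|e^{i\lambda}-\alpha|$ contributions vanish under the classical identity above, leaving
\begin{align}
\frac{1}{2\pi}\int_0^{2\pi}\log(1+s\cos\lambda)\,\dint\lambda = \log(s/2) + \log|\beta| = \log \frac{1+\sqrt{1-s^2}}{2}.
\end{align}

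Multiplying numerator and denominator by $(1-\sqrt{1-s^2})$ rewrites the result as $\log\frac{s^2}{2(1-\sqrt{1-s^2})} = -\log\frac{2-2\sqrt{1-s^2}}{s^2}$, matching the stated form. The boundary case $s=\pm 1$ is recovered by continuity, since the integrand has only an integrable (logarithmic) singularity at the single point where $1+s\cos\lambda$ vanishes and dominated convergence applies. I do not foresee a real obstacle: the main points to verify carefully are the location of the roots ($|\alpha|<1$) and the sign reduction for $s<0$, both of which take only a few lines.
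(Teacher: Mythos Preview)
Your argument is correct and rests on the same classical circle identity as the paper's proof: $\frac{1}{2\pi}\int_0^{2\pi}\log|e^{i\lambda}-a|\,\dint\lambda=\log^+|a|$ (equivalently, Poisson's/Jensen's formula). The only difference is packaging. The paper observes directly that $1+\alpha^2+2\alpha\cos\lambda=|e^{i\lambda}+\alpha|^2$, so the integral vanishes for $|\alpha|\le 1$; it then sets $t=\tfrac{2\alpha}{1+\alpha^2}$, inverts to $\alpha(t)=\tfrac{1-\sqrt{1-t^2}}{t}$, and reads off the answer as $-\log(1+\alpha^2(t))$. You instead factor $1+s\cos\lambda$ into two linear pieces in $e^{i\lambda}$ with reciprocal real roots and apply the identity to each factor separately, which forces you to track which root lies inside the disk and to handle the sign of $s$ via the evenness observation. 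Both routes are short; the paper's squared-modulus substitution avoids the root-location check and the $s<0$ reduction, while your factorization is arguably more transparent about where the formula comes from.
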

In order to use Lemma \ref{Lemma_Poisson}, we must first prove that the fraction 
\begin{align}
K(\sigma,\theta) \dfn \frac{2 G(\sigma,\theta)}{F(\sigma,\theta) + \tfrac{1}{12}}
\end{align}
takes values in the range $[-1,1]$. Since the expressions of $F(\sigma,\theta)$ and $G(\sigma,\theta)$ are rather cumbersome functions of $\sigma$ and $\theta$, we calculate them numerically and plot in Figure \ref{fig:Values_of_K} curves of $K(\sigma,\theta)$ for $\sigma=1$ and $\sigma=5$.
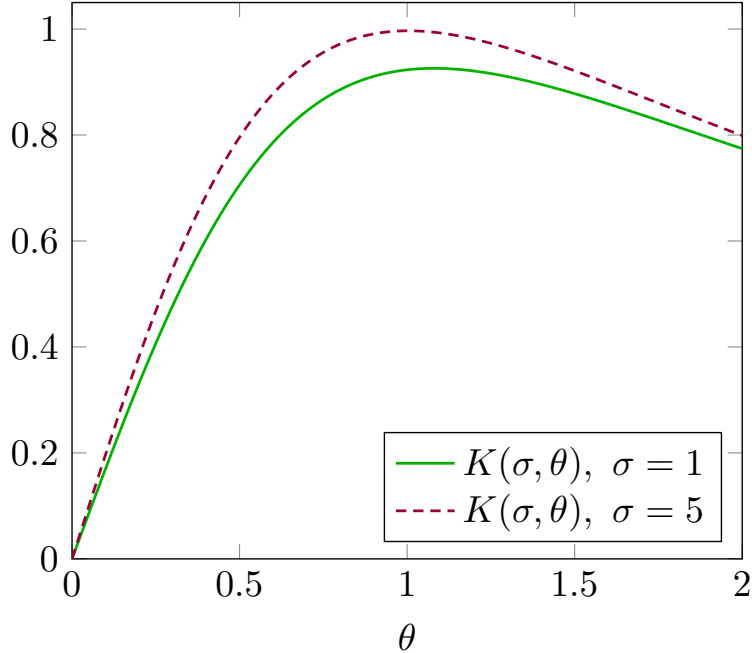
\begin{figure}[ht!]
	\centering
	\begin{tikzpicture}[scale=1.3]
	\begin{axis}[
	disabledatascaling,
	scaled x ticks=false,
	xticklabel style={/pgf/number format/fixed,
		/pgf/number format/precision=3},
	scaled y ticks=false,
	yticklabel style={/pgf/number format/fixed,
		/pgf/number format/precision=3},
	xlabel={$\theta$},
	xmin=0, xmax=2,
	ymin=0, ymax=1.05,
	legend pos=south east,
	]
	
	\addplot[smooth,color=black!30!green,thick]
	table[row sep=crcr] 
	{
0	0	\\
0.01	0.017141388	\\
0.02	0.034273963	\\
0.03	0.051388929	\\
0.04	0.068477517	\\
0.05	0.085531005	\\
0.06	0.102540731	\\
0.07	0.119498108	\\
0.08	0.136394635	\\
0.09	0.153221916	\\
0.1	0.169971671	\\
0.11	0.186635749	\\
0.12	0.203206141	\\
0.13	0.219674994	\\
0.14	0.236034618	\\
0.15	0.252277505	\\
0.16	0.268396332	\\
0.17	0.284383974	\\
0.18	0.300233515	\\
0.19	0.315938253	\\
0.2	0.331491713	\\
0.21	0.346887647	\\
0.22	0.362120048	\\
0.23	0.377183152	\\
0.24	0.392071444	\\
0.25	0.406779661	\\
0.26	0.421302798	\\
0.27	0.435636109	\\
0.28	0.449775112	\\
0.29	0.463715588	\\
0.3	0.477453581	\\
0.31	0.490985402	\\
0.32	0.504307628	\\
0.33	0.517417096	\\
0.34	0.530310908	\\
0.35	0.542986425	\\
0.36	0.555441267	\\
0.37	0.567673307	\\
0.38	0.579680667	\\
0.39	0.591461719	\\
0.4	0.603015075	\\
0.41	0.614339585	\\
0.42	0.625434329	\\
0.43	0.636298616	\\
0.44	0.646931974	\\
0.45	0.657334145	\\
0.46	0.667505079	\\
0.47	0.677444928	\\
0.48	0.687154037	\\
0.49	0.696632941	\\
0.5	0.705882353	\\
0.51	0.714903161	\\
0.52	0.723696419	\\
0.53	0.732263338	\\
0.54	0.740605285	\\
0.55	0.748723766	\\
0.56	0.756620429	\\
0.57	0.764297048	\\
0.58	0.771755522	\\
0.59	0.778997865	\\
0.6	0.786026201	\\
0.61	0.792842753	\\
0.62	0.799449841	\\
0.63	0.805849873	\\
0.64	0.812045339	\\
0.65	0.818038804	\\
0.66	0.823832903	\\
0.67	0.829430334	\\
0.68	0.834833852	\\
0.69	0.840046263	\\
0.7	0.845070423	\\
0.71	0.849909223	\\
0.72	0.854565596	\\
0.73	0.859042501	\\
0.74	0.863342926	\\
0.75	0.86746988	\\
0.76	0.871426387	\\
0.77	0.875215489	\\
0.78	0.878840231	\\
0.79	0.882303669	\\
0.8	0.885608856	\\
0.81	0.888758846	\\
0.82	0.891756688	\\
0.83	0.894605422	\\
0.84	0.897308076	\\
0.85	0.899867667	\\
0.86	0.902287193	\\
0.87	0.904569636	\\
0.88	0.906717956	\\
0.89	0.908735088	\\
0.9	0.910623946	\\
0.91	0.912387414	\\
0.92	0.914028348	\\
0.93	0.915549576	\\
0.94	0.916953892	\\
0.95	0.91824406	\\
0.96	0.919422807	\\
0.97	0.920492827	\\
0.98	0.921456779	\\
0.99	0.922317283	\\
1	0.923076923	\\
1.01	0.923738244	\\
1.02	0.924303752	\\
1.03	0.924775914	\\
1.04	0.925157158	\\
1.05	0.925449871	\\
1.06	0.925656401	\\
1.07	0.925779053	\\
1.08	0.925820094	\\
1.09	0.925781748	\\
1.1	0.925666199	\\
1.11	0.925475592	\\
1.12	0.925212028	\\
1.13	0.92487757	\\
1.14	0.924474239	\\
1.15	0.924004017	\\
1.16	0.923468846	\\
1.17	0.922870627	\\
1.18	0.922211223	\\
1.19	0.921492456	\\
1.2	0.920716113	\\
1.21	0.919883938	\\
1.22	0.91899764	\\
1.23	0.91805889	\\
1.24	0.917069323	\\
1.25	0.916030534	\\
1.26	0.914944087	\\
1.27	0.913811505	\\
1.28	0.912634281	\\
1.29	0.911413869	\\
1.3	0.910151692	\\
1.31	0.908849138	\\
1.32	0.907507563	\\
1.33	0.906128289	\\
1.34	0.904712607	\\
1.35	0.903261779	\\
1.36	0.901777031	\\
1.37	0.900259564	\\
1.38	0.898710546	\\
1.39	0.897131117	\\
1.4	0.895522388	\\
1.41	0.893885443	\\
1.42	0.892221338	\\
1.43	0.890531101	\\
1.44	0.888815735	\\
1.45	0.887076217	\\
1.46	0.885313498	\\
1.47	0.883528504	\\
1.48	0.881722138	\\
1.49	0.879895279	\\
1.5	0.87804878	\\
1.51	0.876183476	\\
1.52	0.874300176	\\
1.53	0.872399669	\\
1.54	0.870482722	\\
1.55	0.868550082	\\
1.56	0.866602474	\\
1.57	0.864640605	\\
1.58	0.862665162	\\
1.59	0.860676813	\\
1.6	0.858676208	\\
1.61	0.856663977	\\
1.62	0.854640734	\\
1.63	0.852607077	\\
1.64	0.850563585	\\
1.65	0.848510821	\\
1.66	0.846449332	\\
1.67	0.844379651	\\
1.68	0.842302293	\\
1.69	0.840217761	\\
1.7	0.838126541	\\
1.71	0.836029106	\\
1.72	0.833925916	\\
1.73	0.831817417	\\
1.74	0.82970404	\\
1.75	0.827586207	\\
1.76	0.825464324	\\
1.77	0.823338786	\\
1.78	0.821209978	\\
1.79	0.81907827	\\
1.8	0.816944024	\\
1.81	0.81480759	\\
1.82	0.812669306	\\
1.83	0.810529502	\\
1.84	0.808388495	\\
1.85	0.806246595	\\
1.86	0.804104101	\\
1.87	0.801961303	\\
1.88	0.799818481	\\
1.89	0.797675907	\\
1.9	0.795533845	\\
1.91	0.793392549	\\
1.92	0.791252267	\\
1.93	0.789113236	\\
1.94	0.786975688	\\
1.95	0.784839846	\\
1.96	0.782705926	\\
1.97	0.780574138	\\
1.98	0.778444683	\\
1.99	0.776317757	\\
2	0.774193548	\\
	};
	\legend{}
	\addlegendentry{$K(\sigma,\theta),~\sigma=1$}	
	
	\addplot[smooth,color=black!20!purple,thick,dash pattern={on 3pt off 2pt}]
	table[row sep=crcr]
	{
0	0	\\
0.01	0.019865576	\\
0.02	0.039719317	\\
0.03	0.059549409	\\
0.04	0.079344089	\\
0.05	0.09909166	\\
0.06	0.11878052	\\
0.07	0.138399183	\\
0.08	0.157936299	\\
0.09	0.177380679	\\
0.1	0.196721311	\\
0.11	0.215947387	\\
0.12	0.235048315	\\
0.13	0.254013743	\\
0.14	0.272833572	\\
0.15	0.291497976	\\
0.16	0.309997417	\\
0.17	0.328322657	\\
0.18	0.346464776	\\
0.19	0.364415178	\\
0.2	0.382165605	\\
0.21	0.39970815	\\
0.22	0.417035258	\\
0.23	0.434139743	\\
0.24	0.451014783	\\
0.25	0.467653936	\\
0.26	0.484051136	\\
0.27	0.500200698	\\
0.28	0.516097321	\\
0.29	0.531736088	\\
0.3	0.547112462	\\
0.31	0.562222289	\\
0.32	0.577061794	\\
0.33	0.591627573	\\
0.34	0.605916597	\\
0.35	0.619926199	\\
0.36	0.633654072	\\
0.37	0.64709826	\\
0.38	0.660257153	\\
0.39	0.673129477	\\
0.4	0.685714286	\\
0.41	0.698010953	\\
0.42	0.710019159	\\
0.43	0.721738887	\\
0.44	0.733170407	\\
0.45	0.744314266	\\
0.46	0.755171282	\\
0.47	0.765742526	\\
0.48	0.776029317	\\
0.49	0.786033206	\\
0.5	0.795755968	\\
0.51	0.80519959	\\
0.52	0.814366256	\\
0.53	0.823258343	\\
0.54	0.831878402	\\
0.55	0.840229153	\\
0.56	0.848313472	\\
0.57	0.856134378	\\
0.58	0.863695026	\\
0.59	0.870998696	\\
0.6	0.87804878	\\
0.61	0.884848778	\\
0.62	0.891402281	\\
0.63	0.897712969	\\
0.64	0.903784598	\\
0.65	0.909620991	\\
0.66	0.915226033	\\
0.67	0.92060366	\\
0.68	0.925757851	\\
0.69	0.930692624	\\
0.7	0.935412027	\\
0.71	0.939920129	\\
0.72	0.944221018	\\
0.73	0.948318791	\\
0.74	0.952217552	\\
0.75	0.955921402	\\
0.76	0.959434439	\\
0.77	0.962760748	\\
0.78	0.9659044	\\
0.79	0.968869448	\\
0.8	0.971659919	\\
0.81	0.974279815	\\
0.82	0.976733106	\\
0.83	0.979023729	\\
0.84	0.981155583	\\
0.85	0.98313253	\\
0.86	0.984958387	\\
0.87	0.986636929	\\
0.88	0.988171882	\\
0.89	0.989566925	\\
0.9	0.990825688	\\
0.91	0.991951747	\\
0.92	0.992948626	\\
0.93	0.993819794	\\
0.94	0.994568668	\\
0.95	0.995198603	\\
0.96	0.995712903	\\
0.97	0.99611481	\\
0.98	0.99640751	\\
0.99	0.996594131	\\
1	0.996677741	\\
1.01	0.996661349	\\
1.02	0.996547906	\\
1.03	0.996340303	\\
1.04	0.996041374	\\
1.05	0.995653892	\\
1.06	0.995180572	\\
1.07	0.994624072	\\
1.08	0.993986992	\\
1.09	0.993271874	\\
1.1	0.992481203	\\
1.11	0.991617408	\\
1.12	0.990682864	\\
1.13	0.989679887	\\
1.14	0.988610742	\\
1.15	0.987477639	\\
1.16	0.986282734	\\
1.17	0.985028134	\\
1.18	0.98371589	\\
1.19	0.982348004	\\
1.2	0.980926431	\\
1.21	0.979453071	\\
1.22	0.977929781	\\
1.23	0.976358369	\\
1.24	0.974740593	\\
1.25	0.973078171	\\
1.26	0.971372771	\\
1.27	0.96962602	\\
1.28	0.9678395	\\
1.29	0.966014752	\\
1.3	0.964153276	\\
1.31	0.962256528	\\
1.32	0.960325929	\\
1.33	0.958362857	\\
1.34	0.956368654	\\
1.35	0.954344624	\\
1.36	0.952292036	\\
1.37	0.950212122	\\
1.38	0.948106078	\\
1.39	0.945975069	\\
1.4	0.943820225	\\
1.41	0.941642643	\\
1.42	0.939443391	\\
1.43	0.937223503	\\
1.44	0.934983984	\\
1.45	0.932725811	\\
1.46	0.93044993	\\
1.47	0.928157261	\\
1.48	0.925848695	\\
1.49	0.923525097	\\
1.5	0.921187308	\\
1.51	0.918836141	\\
1.52	0.916472385	\\
1.53	0.914096807	\\
1.54	0.911710147	\\
1.55	0.909313126	\\
1.56	0.906906441	\\
1.57	0.904490768	\\
1.58	0.902066761	\\
1.59	0.899635054	\\
1.6	0.897196262	\\
1.61	0.89475098	\\
1.62	0.892299783	\\
1.63	0.889843231	\\
1.64	0.887381863	\\
1.65	0.884916201	\\
1.66	0.882446752	\\
1.67	0.879974005	\\
1.68	0.877498433	\\
1.69	0.875020495	\\
1.7	0.872540633	\\
1.71	0.870059276	\\
1.72	0.867576838	\\
1.73	0.865093718	\\
1.74	0.862610305	\\
1.75	0.860126971	\\
1.76	0.857644078	\\
1.77	0.855161973	\\
1.78	0.852680995	\\
1.79	0.850201468	\\
1.8	0.847723705	\\
1.81	0.845248009	\\
1.82	0.842774674	\\
1.83	0.840303979	\\
1.84	0.837836197	\\
1.85	0.83537159	\\
1.86	0.83291041	\\
1.87	0.8304529	\\
1.88	0.827999295	\\
1.89	0.825549821	\\
1.9	0.823104693	\\
1.91	0.820664122	\\
1.92	0.818228308	\\
1.93	0.815797446	\\
1.94	0.813371719	\\
1.95	0.810951308	\\
1.96	0.808536384	\\
1.97	0.806127112	\\
1.98	0.803723649	\\
1.99	0.801326148	\\
2	0.798934754	\\
	};
	\addlegendentry{$K(\sigma,\theta),~\sigma=5$}
	
	\end{axis}
	\end{tikzpicture}
	\caption{Plots of $K(\sigma,\theta)$ for $\theta \in [0,2]$ and two values of $\sigma$.}\label{fig:Values_of_K}
\end{figure}
As can be seen in Figure \ref{fig:Values_of_K}, the values of $K(\sigma,\theta)$ are limited to the range $[0,1]$, hence Lemma \ref{Lemma_Poisson} is applicable in this case and we conclude that for any $\theta \in (0,2]$, 
\begin{align}
\bar{H}(Y) 
&\leq \frac{1}{2} \log\left[2 \pi e(F(\sigma,\theta) + \tfrac{1}{12})\right] 
- \frac{1}{2}\log \left(\frac{2-2\sqrt{1-K(\sigma,\theta)^{2}}}{K(\sigma,\theta)^{2}}\right) \\
\label{Upper_Bound_Gauss}
&= \frac{1}{2} \log\left[\frac{\pi e(F(\sigma,\theta) + \tfrac{1}{12})K(\sigma,\theta)^{2}}{1-\sqrt{1-K(\sigma,\theta)^{2}}}\right], 
\end{align}
while for $\theta=0$, the process $\{Y_{n}\}$ is i.i.d.\ and its entropy is upper-bounded by
\begin{align}
\bar{H}(Y) 
&\leq \frac{1}{2} \log\left[2 \pi e(F(\sigma,0) + \tfrac{1}{12})\right],
\end{align}
which is merely \eqref{Univariate_Bound}.
Let us denote this bound for any $\theta \geq 0$ by the function $\mathsf{H}_{\mbox{\tiny TH-1}}(\sigma,\theta)$.
Regarding the trivial bound $H(Y_{n+1}|Y_{n})$, first note that the marginal distribution of $Y_{n}$ is  
\begin{align}
P_{Y_{n}}(i) = \Phi\left(\frac{i+\frac{1}{2}}{\sigma\sqrt{1 + \theta^{2}}}\right) - \Phi\left(\frac{i-\frac{1}{2}}{\sigma\sqrt{1 + \theta^{2}}}\right).
\end{align}
The joint distribution of $(Y_{n},Y_{n+1})$ is calculated as
\begin{align}
&P_{Y_{n}Y_{n+1}}(i,j) \nn \\
&= \prob \left\{Y_{n}=i,~Y_{n+1}=j \right\} \\
&= \prob \left\{i-\tfrac{1}{2} \leq X_{n} \leq i+\tfrac{1}{2},~j-\tfrac{1}{2} \leq X_{n+1} \leq j+\tfrac{1}{2} \right\} \\
&= \prob \left\{i-\tfrac{1}{2} \leq W_{n}+\theta W_{n-1} \leq i+\tfrac{1}{2},~j-\tfrac{1}{2} \leq W_{n+1}+\theta W_{n} \leq j+\tfrac{1}{2} \right\} \\
&= \int_{-\infty}^{\infty} \prob \left\{i-\tfrac{1}{2} \leq s+\theta W_{n-1} \leq i+\tfrac{1}{2},~j-\tfrac{1}{2} \leq W_{n+1}+\theta s \leq j+\tfrac{1}{2} \right\} \tfrac{1}{\sqrt{2 \pi \sigma^{2}}} e^{-s^{2}/2\sigma^{2}} \mathrm{d}s \\
&= \int_{-\infty}^{\infty} \prob \left\{i-\tfrac{1}{2} \leq s+\theta W_{n-1} \leq i+\tfrac{1}{2}\right\} \prob \left\{ j-\tfrac{1}{2} \leq W_{n+1}+\theta s \leq j+\tfrac{1}{2} \right\} \tfrac{1}{\sqrt{2 \pi \sigma^{2}}} e^{-s^{2}/2\sigma^{2}} \mathrm{d}s \\
&= \int_{-\infty}^{\infty} J_{0}(i,s,\sigma,\theta) I_{0}(j,s,\sigma,\theta) \tfrac{1}{\sqrt{2 \pi \sigma^{2}}} e^{-s^{2}/2\sigma^{2}} \mathrm{d}s,
\end{align}
where $I_{0}$ and $J_{0}$ are defined in \eqref{ref8} and \eqref{ref9}, respectively. 
Now, the conditional entropy is given by
\begin{align}
H(Y_{n+1}|Y_{n}) = - \sum_{i=-\infty}^{\infty} \sum_{j=-\infty}^{\infty} P_{Y_{n}Y_{n+1}}(i,j) \log \frac{P_{Y_{n}Y_{n+1}}(i,j)}{P_{Y_{n}}(i)},
\end{align}
which is merely a function of $\sigma$ and $\theta$, to be denoted by $\mathsf{H}_{\mbox{\tiny CE}}(\sigma,\theta)$.
We now compare numerically the bound $\mathsf{H}_{\mbox{\tiny TH-1}}(\sigma,\theta)$ and the trivial bound $H(Y_{n+1}|Y_{n})$ in the specific cases of $\sigma=1$ and $\sigma=5$.
As can be seen in Figure \ref{fig:Gauss}, in some intermediate range of $\theta$ values, the new upper bound from Theorem \ref{Theorem_Gaussian} outperforms the conditional entropy upper bound. Still, for relatively low values of $\theta$, the conditional entropy bound is lower than the bound from Theorem \ref{Theorem_Gaussian}, which is not very surprising, since at the extreme of $\theta=0$, the process $\{Y_{n}\}$ is i.i.d.\ and $H(Y_{n+1}|Y_{n})$ yields an exact estimation for the entropy rate.   

We also compare to the bounds that stems from Gibbs' inequality. Since $R_{\mbox{\tiny Y}}(k)=0$ for any $|k|\geq 2$ for the quantized MA process, and only $R_{\mbox{\tiny Y}}(0)$ and $R_{\mbox{\tiny Y}}(1)$ are non-negative, we calculate numerically the bound from Theorem \ref{Theorem_t_Distribution_Order_1} rather than the bound from Theorem \ref{Theorem_t_Distribution_Order_K}. We denote this bound by $\mathsf{H}_{\mbox{\tiny TH-3}}(\sigma,\theta)$. As can be seen in Figure \ref{fig:Gauss}, this bound is worse than the maximum between $\mathsf{H}_{\mbox{\tiny CE}}(\sigma,\theta)$ and $\mathsf{H}_{\mbox{\tiny TH-1}}(\sigma,\theta)$, hence useless in this case.     

\begin{figure}[h!]	
	\begin{subfigure}[b]{0.5\columnwidth}
		\centering 
			\begin{tikzpicture}[scale=1]
		\begin{axis}[
		disabledatascaling,
		scaled x ticks=false,
		xticklabel style={/pgf/number format/fixed,
			/pgf/number format/precision=3},
		scaled y ticks=false,
		yticklabel style={/pgf/number format/fixed,
			/pgf/number format/precision=3},
		xlabel={$\theta$},
		xmin=0, xmax=2,
		ymin=1.4, ymax=2.2,
		legend pos=north west,
		]
		
		\addplot[smooth,color=black!20!orange,thick,dash pattern={on 3pt off 2pt on 1pt off 2pt}]
		table[row sep=crcr] 
		{
			0	1.45895882	\\
			0.1	1.459352628	\\
			0.2	1.460981545	\\
			0.3	1.465057633	\\
			0.4	1.473206096	\\
			0.5	1.487003295	\\
			0.6	1.507554289	\\
			0.7	1.53524315	\\
			0.8	1.569712759	\\
			0.9	1.610032901	\\
			1	1.654953857	\\
			1.1	1.70314416	\\
			1.2	1.753354991	\\
			1.3	1.804501478	\\
			1.4	1.855679944	\\
			1.5	1.906148268	\\
			1.6	1.95529244	\\
			1.7	2.002594145	\\
			1.8	2.047606506	\\
			1.9	2.089939627	\\
			2	2.129254427	\\
		};
		\legend{}
		\addlegendentry{$\mathsf{H}_{\mbox{\tiny CE}}(\sigma,\theta)$}

		\addplot[smooth,color=black!30!green,thick]
		table[row sep=crcr]
		{
			0	1.496013873	\\
			0.01	1.496019996	\\
			0.02	1.49603837	\\
			0.03	1.496069009	\\
			0.04	1.496111938	\\
			0.05	1.496167192	\\
			0.06	1.496234817	\\
			0.07	1.496314866	\\
			0.08	1.496407405	\\
			0.09	1.496512509	\\
			0.1	1.496630264	\\
			0.11	1.496760766	\\
			0.12	1.496904123	\\
			0.13	1.497060452	\\
			0.14	1.497229882	\\
			0.15	1.497412555	\\
			0.16	1.497608621	\\
			0.17	1.497818244	\\
			0.18	1.4980416	\\
			0.19	1.498278877	\\
			0.2	1.498530275	\\
			0.21	1.498796009	\\
			0.22	1.499076306	\\
			0.23	1.499371405	\\
			0.24	1.499681562	\\
			0.25	1.500007045	\\
			0.26	1.500348141	\\
			0.27	1.500705147	\\
			0.28	1.50107838	\\
			0.29	1.501468172	\\
			0.3	1.501874872	\\
			0.31	1.502298847	\\
			0.32	1.502740482	\\
			0.33	1.50320018	\\
			0.34	1.503678365	\\
			0.35	1.504175479	\\
			0.36	1.504691989	\\
			0.37	1.505228378	\\
			0.38	1.505785156	\\
			0.39	1.506362855	\\
			0.4	1.506962029	\\
			0.41	1.50758326	\\
			0.42	1.508227153	\\
			0.43	1.508894342	\\
			0.44	1.509585487	\\
			0.45	1.510301278	\\
			0.46	1.511042433	\\
			0.47	1.511809702	\\
			0.48	1.512603866	\\
			0.49	1.513425739	\\
			0.5	1.514276167	\\
			0.51	1.515156033	\\
			0.52	1.516066254	\\
			0.53	1.517007784	\\
			0.54	1.517981615	\\
			0.55	1.518988777	\\
			0.56	1.520030339	\\
			0.57	1.521107411	\\
			0.58	1.522221144	\\
			0.59	1.52337273	\\
			0.6	1.524563402	\\
			0.61	1.525794436	\\
			0.62	1.527067153	\\
			0.63	1.528382912	\\
			0.64	1.529743119	\\
			0.65	1.531149218	\\
			0.66	1.532602697	\\
			0.67	1.534105083	\\
			0.68	1.535657941	\\
			0.69	1.537262875	\\
			0.7	1.538921522	\\
			0.71	1.540635553	\\
			0.72	1.542406665	\\
			0.73	1.544236582	\\
			0.74	1.54612705	\\
			0.75	1.54807983	\\
			0.76	1.550096692	\\
			0.77	1.552179412	\\
			0.78	1.554329763	\\
			0.79	1.556549506	\\
			0.8	1.558840386	\\
			0.81	1.561204118	\\
			0.82	1.563642379	\\
			0.83	1.566156802	\\
			0.84	1.568748956	\\
			0.85	1.571420345	\\
			0.86	1.574172388	\\
			0.87	1.577006412	\\
			0.88	1.579923638	\\
			0.89	1.582925168	\\
			0.9	1.586011976	\\
			0.91	1.589184894	\\
			0.92	1.592444602	\\
			0.93	1.595791618	\\
			0.94	1.599226287	\\
			0.95	1.602748777	\\
			0.96	1.606359069	\\
			0.97	1.61005695	\\
			0.98	1.613842016	\\
			0.99	1.617713661	\\
			1	1.621671087	\\
			1.01	1.625713297	\\
			1.02	1.629839103	\\
			1.03	1.63404713	\\
			1.04	1.638335825	\\
			1.05	1.642703461	\\
			1.06	1.647148153	\\
			1.07	1.651667866	\\
			1.08	1.656260428	\\
			1.09	1.660923541	\\
			1.1	1.665654801	\\
			1.11	1.670451706	\\
			1.12	1.675311676	\\
			1.13	1.680232063	\\
			1.14	1.685210169	\\
			1.15	1.690243258	\\
			1.16	1.695328572	\\
			1.17	1.700463341	\\
			1.18	1.705644797	\\
			1.19	1.710870186	\\
			1.2	1.716136776	\\
			1.21	1.721441872	\\
			1.22	1.726782817	\\
			1.23	1.732157004	\\
			1.24	1.737561883	\\
			1.25	1.742994968	\\
			1.26	1.748453834	\\
			1.27	1.753936133	\\
			1.28	1.759439587	\\
			1.29	1.764961994	\\
			1.3	1.770501233	\\
			1.31	1.77605526	\\
			1.32	1.781622112	\\
			1.33	1.787199904	\\
			1.34	1.792786834	\\
			1.35	1.798381175	\\
			1.36	1.80398128	\\
			1.37	1.809585579	\\
			1.38	1.815192575	\\
			1.39	1.820800845	\\
			1.4	1.826409037	\\
			1.41	1.832015869	\\
			1.42	1.837620125	\\
			1.43	1.843220652	\\
			1.44	1.848816362	\\
			1.45	1.854406226	\\
			1.46	1.859989272	\\
			1.47	1.865564584	\\
			1.48	1.8711313	\\
			1.49	1.876688606	\\
			1.5	1.88223574	\\
			1.51	1.887771984	\\
			1.52	1.893296664	\\
			1.53	1.898809151	\\
			1.54	1.904308852	\\
			1.55	1.909795217	\\
			1.56	1.915267728	\\
			1.57	1.920725904	\\
			1.58	1.926169296	\\
			1.59	1.931597488	\\
			1.6	1.93701009	\\
			1.61	1.942406744	\\
			1.62	1.947787115	\\
			1.63	1.953150897	\\
			1.64	1.958497804	\\
			1.65	1.963827576	\\
			1.66	1.969139972	\\
			1.67	1.974434772	\\
			1.68	1.979711775	\\
			1.69	1.9849708	\\
			1.7	1.99021168	\\
			1.71	1.995434266	\\
			1.72	2.000638425	\\
			1.73	2.005824036	\\
			1.74	2.010990994	\\
			1.75	2.016139205	\\
			1.76	2.02126859	\\
			1.77	2.026379077	\\
			1.78	2.03147061	\\
			1.79	2.036543138	\\
			1.8	2.041596624	\\
			1.81	2.046631038	\\
			1.82	2.051646358	\\
			1.83	2.056642571	\\
			1.84	2.061619672	\\
			1.85	2.066577662	\\
			1.86	2.071516549	\\
			1.87	2.076436347	\\
			1.88	2.081337078	\\
			1.89	2.086218766	\\
			1.9	2.091081443	\\
			1.91	2.095925145	\\
			1.92	2.100749912	\\
			1.93	2.10555579	\\
			1.94	2.110342827	\\
			1.95	2.115111076	\\
			1.96	2.119860594	\\
			1.97	2.124591439	\\
			1.98	2.129303674	\\
			1.99	2.133997366	\\
			2	2.138672581	\\
		};
		\addlegendentry{$\mathsf{H}_{\mbox{\tiny TH-1}}(\sigma,\theta)$}

		\addplot[smooth,color=black!20!blue,thick,dash pattern={on 4pt off 2pt}]
		table[row sep=crcr]
		{
0	1.496013873	\\
0.1	1.496656924	\\
0.2	1.498941356	\\
0.3	1.503830439	\\
0.4	1.512619471	\\
0.5	1.526573281	\\
0.6	1.546592402	\\
0.7	1.573008724	\\
0.8	1.605555174	\\
0.9	1.643483604	\\
1	1.685758684	\\
1.1	1.731252308	\\
1.2	1.778890778	\\
1.3	1.827740938	\\
1.4	1.877044097	\\
1.5	1.926215005	\\
1.6	1.974822174	\\
1.7	2.022561121	\\
1.8	2.069227385	\\
1.9	2.114692595	\\
2	2.158884744	\\
		};
		\addlegendentry{$\mathsf{H}_{\mbox{\tiny TH-3}}(\sigma,\theta)$}
		
		\end{axis}
		\end{tikzpicture}
		\caption{For $\sigma=1$} 
		\label{fig:POPULATION1}
	\end{subfigure}%
	\begin{subfigure}[b]{0.5\columnwidth}
		\centering 
	\begin{tikzpicture}[scale=1]
\begin{axis}[
disabledatascaling,
scaled x ticks=false,
xticklabel style={/pgf/number format/fixed,
	/pgf/number format/precision=3},
scaled y ticks=false,
yticklabel style={/pgf/number format/fixed,
	/pgf/number format/precision=3},
xlabel={$\theta$},
xmin=0, xmax=2,
ymin=3, ymax=3.8,
legend pos=north west,
]

\addplot[smooth,color=black!20!orange,thick,dash pattern={on 3pt off 2pt on 1pt off 2pt}]
table[row sep=crcr] 
{
	0	3.030040341	\\
	0.1	3.030105839	\\
	0.2	3.030867458	\\
	0.3	3.033854229	\\
	0.4	3.041111053	\\
	0.5	3.05460708	\\
	0.6	3.075696688	\\
	0.7	3.104811123	\\
	0.8	3.141453621	\\
	0.9	3.184432481	\\
	1	3.232184454	\\
	1.1	3.283054177	\\
	1.2	3.33546656	\\
	1.3	3.387996987	\\
	1.4	3.439377473	\\
	1.5	3.488478489	\\
	1.6	3.534292265	\\
	1.7	3.575927672	\\
	1.8	3.612615983	\\
	1.9	3.643721893	\\
	2	3.668753836	\\
};
\legend{}
\addlegendentry{$\mathsf{H}_{\mbox{\tiny CE}}(\sigma,\theta)$}

\addplot[smooth,color=black!30!green,thick]
table[row sep=crcr]
{
	0	3.031698717	\\
	0.01	3.031699046	\\
	0.02	3.031700033	\\
	0.03	3.03170168	\\
	0.04	3.031703988	\\
	0.05	3.031706961	\\
	0.06	3.031710601	\\
	0.07	3.031714913	\\
	0.08	3.031719902	\\
	0.09	3.031725575	\\
	0.1	3.031731937	\\
	0.11	3.031738997	\\
	0.12	3.031746764	\\
	0.13	3.031755246	\\
	0.14	3.031764455	\\
	0.15	3.031774401	\\
	0.16	3.031785098	\\
	0.17	3.031796559	\\
	0.18	3.031808798	\\
	0.19	3.031821832	\\
	0.2	3.031835677	\\
	0.21	3.031850351	\\
	0.22	3.031865875	\\
	0.23	3.031882269	\\
	0.24	3.031899555	\\
	0.25	3.031917759	\\
	0.26	3.031936904	\\
	0.27	3.031957019	\\
	0.28	3.031978133	\\
	0.29	3.032000278	\\
	0.3	3.032023485	\\
	0.31	3.032047791	\\
	0.32	3.032073234	\\
	0.33	3.032099854	\\
	0.34	3.032127693	\\
	0.35	3.032156799	\\
	0.36	3.032187219	\\
	0.37	3.032219006	\\
	0.38	3.032252216	\\
	0.39	3.032286909	\\
	0.4	3.032323148	\\
	0.41	3.032361002	\\
	0.42	3.032400544	\\
	0.43	3.032441853	\\
	0.44	3.032485012	\\
	0.45	3.032530111	\\
	0.46	3.032577249	\\
	0.47	3.032626529	\\
	0.48	3.032678064	\\
	0.49	3.032731974	\\
	0.5	3.032788391	\\
	0.51	3.032847455	\\
	0.52	3.03290932	\\
	0.53	3.032974151	\\
	0.54	3.033042127	\\
	0.55	3.033113444	\\
	0.56	3.033188315	\\
	0.57	3.033266969	\\
	0.58	3.033349661	\\
	0.59	3.033436666	\\
	0.6	3.033528287	\\
	0.61	3.033624856	\\
	0.62	3.033726739	\\
	0.63	3.03383434	\\
	0.64	3.033948102	\\
	0.65	3.034068521	\\
	0.66	3.034196142	\\
	0.67	3.034331575	\\
	0.68	3.034475501	\\
	0.69	3.03462868	\\
	0.7	3.034791967	\\
	0.71	3.034966323	\\
	0.72	3.035152833	\\
	0.73	3.03535273	\\
	0.74	3.035567412	\\
	0.75	3.035798479	\\
	0.76	3.036047762	\\
	0.77	3.036317369	\\
	0.78	3.036609736	\\
	0.79	3.036927692	\\
	0.8	3.037274535	\\
	0.81	3.037654133	\\
	0.82	3.038071039	\\
	0.83	3.038530645	\\
	0.84	3.03903937	\\
	0.85	3.039604889	\\
	0.86	3.040236435	\\
	0.87	3.040945162	\\
	0.88	3.041744603	\\
	0.89	3.042651239	\\
	0.9	3.043685178	\\
	0.91	3.044870949	\\
	0.92	3.046238391	\\
	0.93	3.047823511	\\
	0.94	3.049669145	\\
	0.95	3.051825048	\\
	0.96	3.054346853	\\
	0.97	3.057293251	\\
	0.98	3.060720809	\\
	0.99	3.064676595	\\
	1	3.069189943	\\
	1.01	3.074265951	\\
	1.02	3.079883373	\\
	1.03	3.085997956	\\
	1.04	3.092549843	\\
	1.05	3.09947221	\\
	1.06	3.10669862	\\
	1.07	3.114167883	\\
	1.08	3.121826473	\\
	1.09	3.129629144	\\
	1.1	3.137538492	\\
	1.11	3.145524022	\\
	1.12	3.153561067	\\
	1.13	3.161629756	\\
	1.14	3.169714107	\\
	1.15	3.177801265	\\
	1.16	3.185880884	\\
	1.17	3.193944622	\\
	1.18	3.201985748	\\
	1.19	3.20999883	\\
	1.2	3.217979478	\\
	1.21	3.225924151	\\
	1.22	3.233829997	\\
	1.23	3.241694727	\\
	1.24	3.249516516	\\
	1.25	3.257293919	\\
	1.26	3.265025805	\\
	1.27	3.272711304	\\
	1.28	3.280349766	\\
	1.29	3.287940721	\\
	1.3	3.295483849	\\
	1.31	3.302978961	\\
	1.32	3.310425972	\\
	1.33	3.31782489	\\
	1.34	3.325175795	\\
	1.35	3.332478835	\\
	1.36	3.339734209	\\
	1.37	3.346942161	\\
	1.38	3.354102976	\\
	1.39	3.361216969	\\
	1.4	3.368284479	\\
	1.41	3.375305873	\\
	1.42	3.382281531	\\
	1.43	3.38921185	\\
	1.44	3.396097241	\\
	1.45	3.402938121	\\
	1.46	3.409734917	\\
	1.47	3.416488063	\\
	1.48	3.423197993	\\
	1.49	3.429865149	\\
	1.5	3.436489971	\\
	1.51	3.443072902	\\
	1.52	3.449614383	\\
	1.53	3.456114855	\\
	1.54	3.462574757	\\
	1.55	3.468994528	\\
	1.56	3.475374601	\\
	1.57	3.481715409	\\
	1.58	3.488017379	\\
	1.59	3.494280936	\\
	1.6	3.500506502	\\
	1.61	3.506694492	\\
	1.62	3.512845319	\\
	1.63	3.51895939	\\
	1.64	3.525037109	\\
	1.65	3.531078874	\\
	1.66	3.53708508	\\
	1.67	3.543056114	\\
	1.68	3.548992361	\\
	1.69	3.5548942	\\
	1.7	3.560762006	\\
	1.71	3.566596146	\\
	1.72	3.572396987	\\
	1.73	3.578164888	\\
	1.74	3.583900202	\\
	1.75	3.58960328	\\
	1.76	3.595274467	\\
	1.77	3.600914103	\\
	1.78	3.606522523	\\
	1.79	3.612100059	\\
	1.8	3.617647035	\\
	1.81	3.623163775	\\
	1.82	3.628650594	\\
	1.83	3.634107805	\\
	1.84	3.639535717	\\
	1.85	3.644934633	\\
	1.86	3.650304852	\\
	1.87	3.655646669	\\
	1.88	3.660960376	\\
	1.89	3.666246259	\\
	1.9	3.671504602	\\
	1.91	3.676735682	\\
	1.92	3.681939774	\\
	1.93	3.68711715	\\
	1.94	3.692268076	\\
	1.95	3.697392817	\\
	1.96	3.702491631	\\
	1.97	3.707564774	\\
	1.98	3.712612499	\\
	1.99	3.717635055	\\
	2	3.722632686	\\
};
\addlegendentry{$\mathsf{H}_{\mbox{\tiny TH-1}}(\sigma,\theta)$}

\addplot[smooth,color=black!20!blue,thick,dash pattern={on 4pt off 2pt}]
table[row sep=crcr]
{
0	3.031698717	\\
0.1	3.031780135	\\
0.2	3.032583748	\\
0.3	3.035623569	\\
0.4	3.042924633	\\
0.5	3.056437898	\\
0.6	3.077507551	\\
0.7	3.106567308	\\
0.8	3.143139098	\\
0.9	3.186071916	\\
1	3.233877254	\\
1.1	3.285028322	\\
1.2	3.338156584	\\
1.3	3.392142787	\\
1.4	3.446132032	\\
1.5	3.499506965	\\
1.6	3.551844618	\\
1.7	3.602871823	\\
1.8	3.65242612	\\
1.9	3.700424292	\\
2	3.746838328	\\
};
\addlegendentry{$\mathsf{H}_{\mbox{\tiny TH-3}}(\sigma,\theta)$}

\end{axis}
\end{tikzpicture}
		\caption{For $\sigma=5$} 
		\label{fig:POPULATION2}
	\end{subfigure}%
	\caption{Comparison between the bounds $\mathsf{H}_{\mbox{\tiny TH-1}}(\sigma,\theta)$, $\mathsf{H}_{\mbox{\tiny TH-3}}(\sigma,\theta)$, and the trivial bound $H(Y_{n+1}|Y_{n})$ for a quantized MA process with $\theta$ ranging in $[0,2]$ and two values of $\sigma$.} 
	\label{fig:Gauss}
\end{figure}
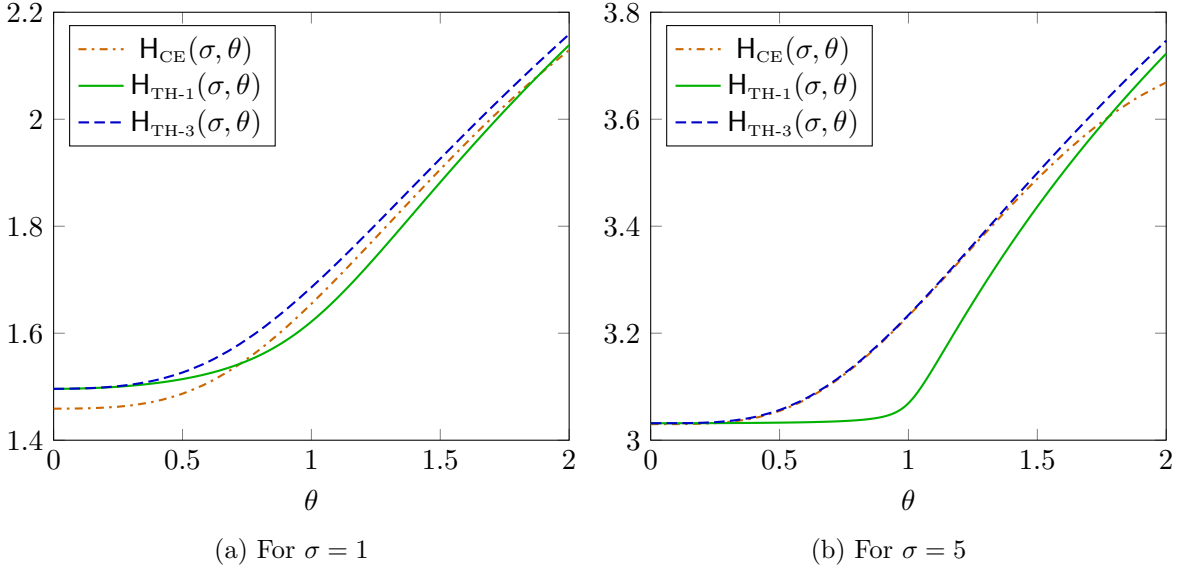

\subsection{The Quantized-Hidden Autoregressive Process} \label{SEC_Example2}

We continue by exhibiting the usefulness of the upper bound in Theorem \ref{Theorem_t_Distribution_Order_K} by referring to a quantized version of a hidden AR process.
Let $\{W_{n}\}$ be a white noise process, defined by $W_{n} \sim \calN(0,\sigma^{2})$, and for $\varphi \in \mathbb{R}$, let $\{X_{n}\}$ be a first-order AR process, defined by
\begin{align}
X_{n} = \varphi X_{n-1} + W_{n}.
\end{align}
This process is Gaussian and stationary for any $\varphi \in (-1,1)$, and its covariance function is given by
\begin{align}
R_{\mbox{\tiny X}}(k)
= \frac{\sigma^{2}}{1-\varphi^{2}} \varphi^{|k|}.
\end{align}
The PSD function is given by
\begin{align}
\Phi_{\mbox{\tiny X}}(\lambda) 
= \sum_{k=-\infty}^{\infty} R_{\mbox{\tiny X}}(k) e^{i \lambda k}
= \frac{\sigma^{2}}{1 + \varphi^{2} - 2 \varphi \cos(\lambda)},
\end{align} 
and the differential entropy rate of the process $\{X_{n}\}$ is given by 
\begin{align}
\bar{H}(X) 
&= \frac{1}{2} \log (2 \pi e) 
+ \frac{1}{4\pi} \int_{0}^{2\pi} \log \Phi_{\mbox{\tiny X}}(\lambda) \dint\lambda \\
&= \frac{1}{2} \log (2 \pi e \sigma^{2}) 
+ \frac{1}{4\pi} \int_{0}^{2\pi} \log (1 + \varphi^{2} - 2 \varphi \cos(\lambda)) \dint\lambda \\
&= \frac{1}{2} \log (2 \pi e \sigma^{2}).
\end{align}

Let $\{V_{n}\}$ be another white noise process, defined by $V_{n} \sim \calN(0,\nu^{2})$, and independent of $\{W_{n}\}$. Define the hidden AR process by 
\begin{align}
U_{n} = X_{n} + V_{n}.
\end{align}
Define the quantization function $\mathsf{Q}(\cdot)$ as in \eqref{DEF_QUAN} and define the quantized-hidden AR process $\{Y_{n}\}$ by $Y_{n} = \mathsf{Q}(U_{n})$ at any time $n$.  
As explained for the quantized MA process, the quantized AR process is strongly stationary and its PSD function is well defined. 
In order to calculate the covariance function of the quantized process, let us consider an AR process whose initial value is given by $X_{0}$, and choose $X_{0} \sim \calN(0,\sigma_{0}^{2})$, with $\sigma_{0}^{2} = \tfrac{\sigma^{2}}{1-\varphi^{2}}$, which ensures stationarity.
For $R_{\mbox{\tiny Y}}(0)$, note that $\Exp[\mathsf{Q}(U_{0})]=0$, and then, according to \eqref{Second_Moment_Quan_Gauss}, 
\begin{align}
R_{\mbox{\tiny Y}}(0)
&=\Exp[\mathsf{Q}(U_{0})^{2}] \\
&= \sum_{k=1}^{\infty} 2k^{2} \left[\Phi\left(\frac{k+\frac{1}{2}}{\sqrt{\sigma_{0}^{2}+\nu^{2}}}\right) - \Phi\left(\frac{k-\frac{1}{2}}{\sqrt{\sigma_{0}^{2}+\nu^{2}}}\right)\right] \\
&\dfn \mathsf{R}_{0}(\sigma,\varphi,\nu).
\end{align}
For $R_{\mbox{\tiny Y}}(k)$, $k \geq 1$, first note that
\begin{align}
X_{1} &= \varphi X_{0} + W_{1} \\
X_{2} &= \varphi X_{1} + W_{2} = \varphi^{2} X_{0} + \varphi W_{1} + W_{2},
\end{align}
and for a general $k \geq 1$,
\begin{align} 
X_{k} 
= \varphi X_{k-1} + W_{k} 
=\varphi^{k} X_{0} + \sum_{m=0}^{k-1} \varphi^{m} W_{k-m} 
\dfn \varphi^{k} X_{0} + \tilde{W}_{k},
\end{align}
where $\tilde{W}_{k}$ is a Gaussian random variable with zero mean and a variance of
\begin{align}
\text{Var}(\tilde{W}_{k}) 
&= \sigma^{2} \left(1 + \varphi^{2} + \varphi^{4} + \ldots + \varphi^{2(k-1)} \right) \\
&= \sigma^{2} \frac{1-\varphi^{2k}}{1-\varphi^{2}} \\
&\dfn \sigma_{k}^{2}. 
\end{align}
Then,
\begin{align}
R_{\mbox{\tiny Y}}(k)
&=\Exp[\mathsf{Q}(U_{0}) \mathsf{Q}(U_{k})] \\
&=\Exp[\mathsf{Q}(X_{0} + V_{0}) \mathsf{Q}(\varphi^{k} X_{0} + \tilde{W}_{k} + V_{k})] \\
&= \int_{-\infty}^{\infty} \Exp[\mathsf{Q}(s + V_{0}) \mathsf{Q}(\varphi^{k} s + \tilde{W}_{k} + V_{k})] \tfrac{1}{\sqrt{2 \pi \sigma_{0}^{2}}} e^{-s^{2}/2\sigma_{0}^{2}} \mathrm{d}s \\
\label{ref1}
&= \int_{-\infty}^{\infty} \Exp[\mathsf{Q}(s + V_{0})] \Exp[\mathsf{Q}(\varphi^{k}s + \tilde{W}_{k} + V_{k})] \tfrac{1}{\sqrt{2 \pi \sigma_{0}^{2}}} e^{-s^{2}/2\sigma_{0}^{2}} \mathrm{d}s.
\end{align}
It follows from \eqref{ref2} that
\begin{align}
\Exp[\mathsf{Q}(s + V_{0})]
&= \sum_{\ell=-\infty}^{\infty} \ell \left[\Phi\left(\frac{\ell - s +\frac{1}{2}}{\nu}\right) - \Phi\left(\frac{\ell - s - \frac{1}{2}}{\nu}\right)\right] \\
\label{Tosubs1}
&\dfn \sum_{\ell=-\infty}^{\infty} \ell S_{0}(\ell,s,\nu),
\end{align} 
and 
\begin{align}
\Exp[\mathsf{Q}(\varphi^{k}s + \tilde{W}_{k} + V_{k})]
&= \sum_{m=-\infty}^{\infty} m \left[\Phi\left(\frac{m - \varphi^{k}s +\frac{1}{2}}{\sqrt{\sigma_{k}^{2}+\nu^{2}}}\right) - \Phi\left(\frac{m - \varphi^{k}s - \frac{1}{2}}{\sqrt{\sigma_{k}^{2}+\nu^{2}}}\right)\right] \\
\label{Tosubs2}
&\dfn \sum_{m=-\infty}^{\infty} m T_{k}(m,s,\sigma,\varphi,\nu).
\end{align}
By substituting \eqref{Tosubs1} and \eqref{Tosubs2} back into \eqref{ref1}, we arrive at
\begin{align}
R_{\mbox{\tiny Y}}(k)
&= \int_{-\infty}^{\infty} \left(\sum_{\ell=-\infty}^{\infty} \ell S_{0}(\ell,s,\nu)\right) \left(\sum_{m=-\infty}^{\infty} m T_{k}(m,s,\sigma,\varphi,\nu) \right) \tfrac{1}{\sqrt{2 \pi \sigma_{0}^{2}}} e^{-s^{2}/2\sigma_{0}^{2}} \mathrm{d}s \\
\label{AR_Covariance}
&= \sum_{\ell=-\infty}^{\infty} \sum_{m=-\infty}^{\infty} \ell m \int_{-\infty}^{\infty} S_{0}(\ell,s,\nu) T_{k}(m,s,\sigma,\varphi,\nu) \tfrac{1}{\sqrt{2 \pi \sigma_{0}^{2}}} e^{-s^{2}/2\sigma_{0}^{2}} \mathrm{d}s \\
&\dfn \mathsf{R}_{k}(\sigma,\varphi,\nu).
\end{align}
It follows from Theorem \ref{Theorem_t_Distribution_Order_K} that the entropy rate of the quantized-hidden AR process is upper-bounded by
\begin{align}  
\inf_{\{\bbeta \in \mathbb{R}^{k}:~\sum_{m=1}^{k} |\beta_{m}| < 1 \}} 
\frac{1}{4\pi} \int_{0}^{2\pi} \log \left[2\pi e \frac{\left(\mathsf{R}_{0}(\sigma,\varphi,\nu) + \tfrac{1}{12}\right) + \sum_{m=1}^{k} \beta_{m} \mathsf{R}_{m}(\sigma,\varphi,\nu)}{1 + \sum_{m=1}^{k} \beta_{m} \cos (m \lambda)} \right]  \dint\lambda.
\end{align}   
Let us denote this bound by the function $\mathsf{H}_{\mbox{\tiny TH-2}}^{\mbox{\tiny AR}}(\sigma,\varphi,\nu,k)$.
Regarding the trivial bound $H(Y_{1}|Y_{0})$, first note that the marginal distribution of $Y_{0}$ is  
\begin{align}
P_{Y_{0}}(k) = \Phi\left(\frac{k+\frac{1}{2}}{\sqrt{\sigma_{0}^{2}+\nu^{2}}}\right) - \Phi\left(\frac{k-\frac{1}{2}}{\sqrt{\sigma_{0}^{2}+\nu^{2}}}\right).
\end{align}
The joint distribution of $(Y_{0},Y_{1})$ is given by
\begin{align}
&P_{Y_{0}Y_{1}}(k,\ell) \nn \\
&= \prob \left\{Y_{0}=k,~Y_{1}=\ell \right\} \\
&= \prob \left\{k-\tfrac{1}{2} \leq U_{0} \leq k+\tfrac{1}{2},~\ell-\tfrac{1}{2} \leq U_{1} \leq \ell+\tfrac{1}{2} \right\} \\
&= \prob \left\{k-\tfrac{1}{2} \leq X_{0} + V_{0} \leq k+\tfrac{1}{2},~\ell-\tfrac{1}{2} \leq \varphi X_{0} + W_{1} + V_{1} \leq \ell+\tfrac{1}{2} \right\} \\
&= \int_{-\infty}^{\infty} \prob \left\{k-\tfrac{1}{2} \leq s + V_{0} \leq k+\tfrac{1}{2},~\ell-\tfrac{1}{2} \leq \varphi s + W_{1} + V_{1} \leq \ell+\tfrac{1}{2} \right\} \tfrac{1}{\sqrt{2 \pi \sigma_{0}^{2}}} e^{-s^{2}/2\sigma_{0}^{2}} \mathrm{d}s \\
&= \int_{-\infty}^{\infty} \prob \left\{k-\tfrac{1}{2} \leq s + V_{0} \leq k+\tfrac{1}{2} \right\} \prob \left\{\ell-\tfrac{1}{2} \leq \varphi s + W_{1} + V_{1} \leq \ell+\tfrac{1}{2} \right\} \tfrac{1}{\sqrt{2 \pi \sigma_{0}^{2}}} e^{-s^{2}/2\sigma_{0}^{2}} \mathrm{d}s \\
&= \int_{-\infty}^{\infty} S_{0}(k,s,\nu) T_{1}(\ell,s,\sigma,\varphi,\nu) \tfrac{1}{\sqrt{2 \pi \sigma_{0}^{2}}} e^{-s^{2}/2\sigma_{0}^{2}} \mathrm{d}s.
\end{align}
Now, the conditional entropy is given by
\begin{align}
H(Y_{1}|Y_{0}) = - \sum_{k=-\infty}^{\infty} \sum_{\ell=-\infty}^{\infty} P_{Y_{0}Y_{1}}(k,\ell) \log \frac{P_{Y_{0}Y_{1}}(k,\ell)}{P_{Y_{0}}(k)},
\end{align}
which is merely a function of $\sigma$, $\varphi$, and $\nu$, to be denoted by $\mathsf{H}_{\mbox{\tiny CE}}^{\mbox{\tiny AR}}(\sigma,\varphi,\nu)$.

We now compare numerically the bound $\mathsf{H}_{\mbox{\tiny TH-2}}^{\mbox{\tiny AR}}(\sigma,\varphi,\nu,k)$ with $k=2,3$ and the trivial bound $H(Y_{1}|Y_{0})$ in the specific case of $\sigma=1$ and $\nu=4$.
As can be seen in Figure \ref{fig:QH_AR}, in the range of relatively high $\varphi$ values, the new upper bounds from Theorem \ref{Theorem_t_Distribution_Order_K} outperform the conditional entropy upper bound. Still, for relatively low values of $\varphi$, the conditional entropy bound is lower than the bounds from Theorem \ref{Theorem_t_Distribution_Order_K}, which is not very surprising, since at the extreme of $\varphi=0$, the process $\{Y_{n}\}$ is i.i.d.\ and $H(Y_{1}|Y_{0})$ yields an exact estimation for the entropy rate.
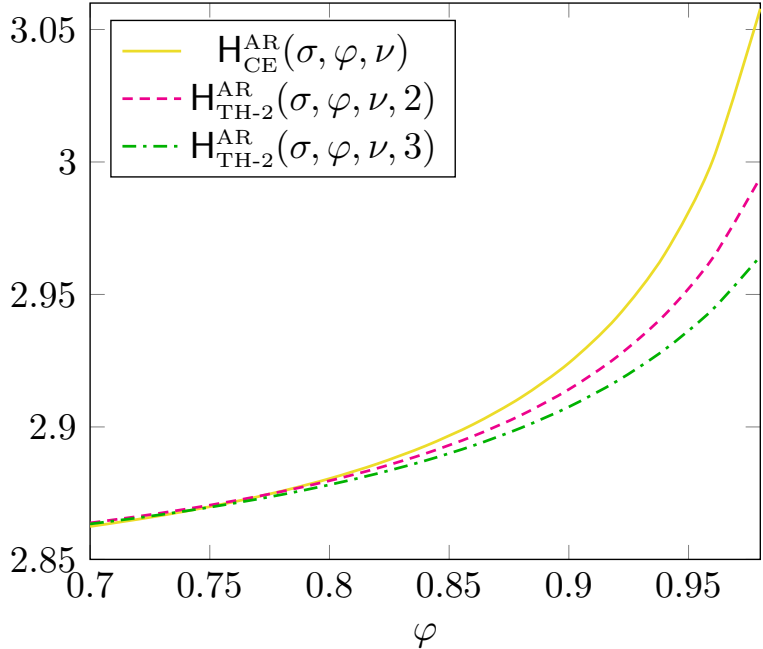
\begin{figure}[ht!]
	\centering
	\begin{tikzpicture}[scale=1.3]
	\begin{axis}[
	disabledatascaling,
	scaled x ticks=false,
	xticklabel style={/pgf/number format/fixed,
		/pgf/number format/precision=3},
	scaled y ticks=false,
	yticklabel style={/pgf/number format/fixed,
		/pgf/number format/precision=3},
	xlabel={$\varphi$},
	xmin=0.7, xmax=0.98,
	ymin=2.85, ymax=3.06,
	legend pos=north west,
	]
	
	\addplot[smooth,color=black!10!yellow,thick]
	table[row sep=crcr] 
	{
0.7	2.862446961	\\
0.72	2.865141897	\\
0.74	2.868205217	\\
0.76	2.871713821	\\
0.78	2.875767852	\\
0.8	2.880500188	\\
0.82	2.886091013	\\
0.84	2.892790964	\\
0.86	2.900959357	\\
0.88	2.911130436	\\
0.9	2.924135013	\\
0.92	2.941340606	\\
0.94	2.965172027	\\
0.96	3.000391919	\\
0.98	3.057882965	\\
	};
	\legend{}
	\addlegendentry{$\mathsf{H}_{\mbox{\tiny CE}}^{\mbox{\tiny AR}}(\sigma,\varphi,\nu)$}	
	
	\addplot[smooth,color=magenta,thick,dash pattern={on 3pt off 2pt}]
	table[row sep=crcr]
	{
0.7	2.863790438	\\
0.72	2.866232262	\\
0.74	2.868980731	\\
0.76	2.872089599	\\
0.78	2.875625822	\\
0.8	2.879695495	\\
0.82	2.884414328	\\
0.84	2.889945206	\\
0.86	2.89651468	\\
0.88	2.904432844	\\
0.9	2.914168837	\\
0.92	2.926411656	\\
0.94	2.94226806	\\
0.96	2.963638167	\\
0.98	2.994138702	\\
	};
	\addlegendentry{$\mathsf{H}_{\mbox{\tiny TH-2}}^{\mbox{\tiny AR}}(\sigma,\varphi,\nu,2)$}
	
	\addplot[smooth,color=black!30!green,thick,dash pattern={on 4pt off 2pt on 1pt off 2pt}]
	table[row sep=crcr]
	{
		0.7	2.863388083	\\
		0.72	2.865702693	\\
		0.74	2.86828803	\\
		0.76	2.871194459	\\
		0.78	2.874454786	\\
		0.8	2.878167076	\\
		0.82	2.882409489	\\
		0.84	2.887272278	\\
		0.86	2.892948241	\\
		0.88	2.899646902	\\
		0.9	2.907583792	\\
		0.92	2.91724992	\\
		0.94	2.92917668	\\
		0.96	2.944288852	\\
		0.98	2.964432509	\\
	};
	\addlegendentry{$\mathsf{H}_{\mbox{\tiny TH-2}}^{\mbox{\tiny AR}}(\sigma,\varphi,\nu,3)$}
	
	\end{axis}
	\end{tikzpicture}
	\caption{Comparison between the bounds $\mathsf{H}_{\mbox{\tiny TH-2}}^{\mbox{\tiny AR}}(\sigma,\varphi,\nu,k)$, $k=2,3$, and the trivial bound $H(Y_{1}|Y_{0})$ for a quantized-hidden AR process with $\sigma=1, \nu=4$, and $\varphi$ ranging in $[0.7,0.98]$.}\label{fig:QH_AR}
\end{figure}

\section*{Appendix A - Proof of Theorem \ref{Theorem_Gaussian}}
\renewcommand{\theequation}{A.\arabic{equation}}
\setcounter{equation}{0}

We have the following 
\begin{align}
H(\bY_{n}) 
&= h(\tilde{\bY}_{n}) \\
&\leq \frac{1}{2} \log \left[(2 \pi e)^{n} \det\left(K_{\tilde{\bY}_{n}}\right) \right] \\
&= \frac{1}{2} \log \left[(2 \pi e)^{n} \det\left(K_{\bY_{n}} + K_{\bU_{n}}\right) \right] \\
\label{refb1}
&= \frac{1}{2} \log \left[(2 \pi e)^{n}  \det\left(K_{\bY_{n}} + \frac{1}{12}I_{n}\right) \right],
\end{align}
where $I_{n}$ denotes the $n \times n$ identity matrix. 
In order to deal with the determinant of a sum of two matrices, we invoke the following result. 

\begin{proposition}[\cite{Fiedler}]
	Let $A$ and $B$ be hermitian $n \times n$ matrices with eigenvalues $\alpha_{1} \geq \alpha_{2} \geq \cdots \alpha_{n}$ and $\beta_{1} \geq \beta_{2} \geq \cdots \beta_{n}$ respectively. Then
	\begin{align}
	\min_{\pi} \prod_{i=1}^{n} (\alpha_{i} + \beta_{\pi_{i}})
	\leq \det(A+B)
	\leq \max_{\pi} \prod_{i=1}^{n} (\alpha_{i} + \beta_{\pi_{i}}),
	\end{align} 
	where the minimum or maximum is taken over all permutations of indices $1,2,\ldots,n$.
	
	In particular, if $\alpha_{n}+\beta_{n} \geq 0$, which is certainly true if both $A$ and $B$ are positive semidefinite, then 
	\begin{align} \label{refb2}
	\prod_{i=1}^{n} (\alpha_{i} + \beta_{i})
	\leq \det(A+B)
	\leq \prod_{i=1}^{n} (\alpha_{i} + \beta_{n+1-i}).
	\end{align}
	This estimates are best possible in terms of the eigenvalues of $A$ and $B$.
\end{proposition}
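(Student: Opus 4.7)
The plan is a variational argument on the unitary group. First, diagonalize both matrices: $A=UD_AU^*$ and $B=VD_BV^*$ with $D_A=\text{diag}(\alpha_1,\ldots,\alpha_n)$ and $D_B=\text{diag}(\beta_1,\ldots,\beta_n)$. Since $\det$ is invariant under unitary conjugation, $\det(A+B)=\det(D_A+WD_BW^*)$ with $W:=U^*V\in\mathcal{U}(n)$. As $(A,B)$ ranges over all pairs of Hermitian matrices with the prescribed spectra, $W$ ranges freely over $\mathcal{U}(n)$; hence the claimed bounds are equivalent to bounding $f(W):=\det(D_A+WD_BW^*)$ on the compact group $\mathcal{U}(n)$, on which $f$ is continuous and therefore attains both extrema.

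Next I would characterize the critical points of $f$. Along the one-parameter family $W(t)=e^{itH}W$ with $H$ Hermitian, a direct computation gives $\left.\frac{d}{dt}\right|_{t=0}\log f(W(t)) = i\,\text{tr}\bigl([WD_BW^*,\,(D_A+WD_BW^*)^{-1}]\,H\bigr)$ (assuming $D_A+WD_BW^*$ is invertible, which I would arrange by a small perturbation of $A$ and $B$ and later remove via continuity). Vanishing of this trace for every Hermitian $H$ forces $[WD_BW^*,\,(D_A+WD_BW^*)^{-1}]=0$, equivalently $[WD_BW^*,\,D_A]=0$. In the generic case of distinct $\alpha_i$, commutation with $D_A$ forces $WD_BW^*$ to be diagonal, so its diagonal is a permutation $(\beta_{\pi(1)},\ldots,\beta_{\pi(n)})$ of the spectrum of $D_B$. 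Hence at every critical point $f(W)=\prod_i(\alpha_i+\beta_{\pi(i)})$, so the global extrema of $f$ over $\mathcal{U}(n)$ are among these $n!$ permutation values, which proves the first pair of inequalities. The claim that the estimates are best possible in terms of the eigenvalues is immediate: taking $A,B$ simultaneously diagonal with the chosen pairing realizes $\prod_i(\alpha_i+\beta_{\pi(i)})$ exactly.

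Finally, to specialize to \eqref{refb2} under $\alpha_n+\beta_n\geq 0$, every factor $\alpha_i+\beta_j$ is non-negative and the products can be compared termwise. The elementary identity $(\alpha_i+\beta_q)(\alpha_j+\beta_p)-(\alpha_i+\beta_p)(\alpha_j+\beta_q)=(\alpha_i-\alpha_j)(\beta_p-\beta_q)\geq 0$, valid whenever $i<j$ and $p<q$ (by the decreasing orderings), shows that every swap in $\pi$'s values that creates an inversion weakly increases the total product. Iterating this from the identity permutation up to the reverse permutation identifies the minimum with $\prod_i(\alpha_i+\beta_i)$ and the maximum with $\prod_i(\alpha_i+\beta_{n+1-i})$. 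The main obstacle I foresee is the degenerate cases---repeated $\alpha_i$ (which enlarges the critical set, since commutation with $D_A$ no longer implies diagonality) and a singular $D_A+WD_BW^*$ (which breaks the derivative formula). I would dispose of both by perturbing $A$ and $B$ to the generic setting, applying the argument there, and passing to the limit using the joint continuity of eigenvalues (hence of the bounding products) and of the determinant.
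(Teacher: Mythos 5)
The paper does not actually prove this proposition: it is quoted from Fiedler \cite{Fiedler} and used as a black box in Appendix A, so there is no in-paper argument to compare yours against, and any complete proof you give is necessarily an independent route. Judged on its own, the skeleton of your variational argument is sound. The reduction to $f(W)=\det(D_A+WD_BW^{*})$ on $\mathcal{U}(n)$, the first-variation formula, the conclusion that at a critical point with $D_A+WD_BW^{*}$ invertible the matrix $WD_BW^{*}$ commutes with $D_A$ and hence (for distinct $\alpha_i$) is diagonal with diagonal a permutation of the $\beta_j$, and the rearrangement identity $(\alpha_i+\beta_q)(\alpha_j+\beta_p)-(\alpha_i+\beta_p)(\alpha_j+\beta_q)=(\alpha_i-\alpha_j)(\beta_p-\beta_q)\ge 0$ that pins down the extremal permutations once all factors are nonnegative are all correct, as is the sharpness claim via simultaneously diagonal $A,B$.

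The step that is not yet a proof is the disposal of the degenerate cases. Perturbing the spectra does make the $\alpha_i$ distinct, but it does not obviously remove the other obstruction: the global maximum (or minimum) of $f$ may equal $0$ and be attained only at points $W_0$ where $D_A+W_0D_BW_0^{*}$ is singular. There the logarithmic derivative is unavailable; the substitute first-order condition obtained by differentiating $f$ itself, namely $[\,\mathrm{adj}(D_A+M_0),\,M_0]=0$ with $M_0=W_0D_BW_0^{*}$, is informative only when the kernel is one-dimensional (the adjugate is then rank one, supported on a common eigenvector $v$ of $D_A$ and $M_0$, which forces $\alpha_i+\beta_j=0$ for some pair and lets you deflate). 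When the kernel has dimension at least two the adjugate vanishes and the condition is vacuous, and genericity of the two spectra does not visibly exclude this configuration, nor does perturbing $A$ and $B$ manifestly move the extremal value off zero. You need an additional ingredient here --- for instance an induction on $n$ that splits off a common eigenvector at such a point, or a second-order argument --- before the continuity limit can be invoked. This is a repairable boundary case rather than a flaw in the strategy, but as written the proof is incomplete there.
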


Since both matrices in \eqref{refb1} are positive semidefinite (a covariance matrix is always positive semidefinite), we may use the upper bound in \eqref{refb2}. Let $\{\tau_{n,i};~1 \leq i \leq n\}$ be the eigenvalues of the covariance matrix $K_{\bY_{n}}$, then
\begin{align} \label{refb3}
\det\left(K_{\bY_{n}} + \frac{1}{12}I_{n}\right)
\leq \prod_{i=1}^{n} \left(\tau_{n,i} + \frac{1}{12}\right).
\end{align}   
Upper-bounding \eqref{refb1} by \eqref{refb3} yields  
\begin{align}
H(\bY_{n}) 
&\leq \frac{1}{2} \log \left[(2 \pi e)^{n}  \prod_{i=1}^{n} \left(\tau_{n,i} + \frac{1}{12}\right) \right] \\
&= \frac{n}{2} \log (2 \pi e) + \frac{1}{2} \sum_{i=1}^{n} \log \left(\tau_{n,i} + \frac{1}{12}\right). 
\end{align}
Taking now the limit, we find that the entropy rate is upper-bounded by
\begin{align}
\bar{H}(Y) 
&= \lim_{n \to \infty} \frac{1}{n} H(\bY_{n}) \\
&\leq \lim_{n \to \infty} \frac{1}{n} \left[\frac{n}{2} \log (2 \pi e) + \frac{1}{2} \sum_{i=1}^{n} \log \left(\tau_{n,i} + \frac{1}{12}\right)\right] \\
&= \frac{1}{2} \log (2 \pi e) 
+ \lim_{n \to \infty} \frac{1}{2n} \sum_{i=1}^{n} \log \left(\tau_{n,i} + \frac{1}{12}\right).
\end{align}
Finally, since $K_{\bY_{n}}$ is a Toeplitz matrix, it follows from Szeg\"o theorem \cite{Szego} that
\begin{align}
\bar{H}(Y) 
&\leq \frac{1}{2} \log (2 \pi e) 
+ \frac{1}{4\pi} \int_{0}^{2\pi} \log \left(\Phi_{\mbox{\tiny Y}}(\lambda) + \frac{1}{12}\right) \dint\lambda,
\end{align}
where $\Phi_{\mbox{\tiny Y}}(\lambda)$ is the power spectral density function of the process $\{Y_{n}\}_{n \geq 1}$.

\section*{Appendix B - Proof of Theorem \ref{Theorem_t_Distribution_Order_K}}
\renewcommand{\theequation}{B.\arabic{equation}}
\setcounter{equation}{0}

Trivially,
\begin{align}
- \log q_{\mbox{\tiny t}}(\by)
= -\log C_{n} +\tfrac{1}{2}\log |\bSigma|
+ \tfrac{1}{2}(\nu+n) \log \left[1+\tfrac{1}{\nu}(\by-\bmu)^{\mbox{\tiny T}}\bSigma^{-1}(\by-\bmu)\right], 
\end{align} 
and so
\begin{align}
&- \Exp\left[\log q_{\mbox{\tiny t}}(\tilde{\bY}_{n})\right] \nn \\
&= -\log C_{n} +\tfrac{1}{2}\log |\bSigma|
+ \tfrac{1}{2}(\nu+n) \Exp \left\{\log \left[1+\tfrac{1}{\nu}(\tilde{\bY}_{n}-\bmu)^{\mbox{\tiny T}}\bSigma^{-1}(\tilde{\bY}_{n}-\bmu)\right] \right\} \\
\label{ref4}
&\leq -\log C_{n} +\tfrac{1}{2}\log |\bSigma|
+ \tfrac{1}{2}(\nu+n) \log \left[1+\tfrac{1}{\nu} \Exp \left\{ (\tilde{\bY}_{n}-\bmu)^{\mbox{\tiny T}}\bSigma^{-1}(\tilde{\bY}_{n}-\bmu)\right\} \right],
\end{align}
where \eqref{ref4} follows from Jensen's inequality and the concavity of the logarithmic function. 

In order to confine ourself to a relatively small set of free parameters (to be optimized eventually), let us consider a symmetric banded Toeplitz matrix $\bSigma^{-1} = \{a_{ij}\}$, $i,j \in \{1,2,\ldots,n\}$, such that $a_{i,j} = \alpha_{|i-j|}$ as long as $|i-j| \leq k$, and zero otherwise. We assume that $\alpha_{0}>0$ and denote $\balpha = (\alpha_{0},\alpha_{1},\ldots,\alpha_{k})$.  
Since every positive definite matrix is invertible and its inverse is also positive definite \cite[p.\ 438, Theorem 7.2.1]{MATRIX_Analysis}, it is enough to require that $\bSigma^{-1}$ is positive definite.
We have the following result concerning symmetric Toeplitz matrices, which follows directly from \cite[p.\ 349, Theorem 6.1.10]{MATRIX_Analysis}. 

\begin{proposition} \label{Toeplitz_Lemma}
	If $M$ is a symmetric Toeplitz matrix, i.e., the entries $m_{ij}$ are given as a function of their absolute index differences: $m_{ij} = h(|i-j|)$, and the strict inequality 
	\begin{align}
	\sum_{j \neq 0} |h(j)| < h(0)
	\end{align}
	holds, then $M$ is strictly positive definite.
\end{proposition}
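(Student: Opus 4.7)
The plan is to derive the conclusion from the Gershgorin disc theorem, which is essentially the content of the cited \cite[Thm.~6.1.10]{MATRIX_Analysis}. Since $M$ is real and symmetric, it is Hermitian, so every eigenvalue is real; and by Gershgorin, every eigenvalue $\lambda$ satisfies $|\lambda - m_{ii}| \le R_i$ for some row index $i$, where $R_i = \sum_{j \ne i} |m_{ij}|$ is the deleted absolute row sum. The strategy is simply to check strict diagonal dominance using the Toeplitz structure and read off positivity of every eigenvalue from the fact that each disc will then sit in the open right half-plane.

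First I would rewrite $R_i$ using the Toeplitz structure. Since $m_{ij} = h(|i-j|)$, the entries in row $i$ other than the diagonal are $h(1), h(2), \ldots, h(i-1)$ (from columns $j<i$) and $h(1), h(2), \ldots, h(n-i)$ (from columns $j>i$), so
\begin{align}
R_i = \sum_{k=1}^{i-1} |h(k)| + \sum_{k=1}^{n-i} |h(k)| \;\le\; 2 \sum_{k=1}^{\infty} |h(k)| \;=\; \sum_{j \ne 0} |h(j)|,
\end{align}
where in the final equality I use the symmetric convention $h(-k) \equiv h(k)$ (forced by $m_{ij} = m_{ji}$), so the sum over all non-zero integers $j$ equals twice the positive-index sum. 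By the hypothesis $\sum_{j\ne 0} |h(j)| < h(0)$, we conclude $R_i < h(0) = m_{ii}$ uniformly in $i$. Note also that $h(0)>0$ is forced by the same hypothesis, so $M$ is strictly diagonally dominant with positive diagonal entries.

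Combining these, each Gershgorin disc $\{z : |z - h(0)| \le R_i\}$ lies strictly inside the open right half-plane. Every real eigenvalue of $M$ therefore satisfies $\lambda \ge h(0) - R_i > 0$, which is exactly the statement that $M$ is strictly positive definite. There is no real obstacle in this argument; the only thing to keep track of is the bookkeeping that each nonzero lag $k$ appears at most twice in row $i$ (once at column $i-k$ and once at $i+k$), which is precisely why the two-sided sum $\sum_{j \ne 0} |h(j)|$—rather than the one-sided sum—is the natural upper bound on every Gershgorin radius and why the stated condition is exactly the right strength.
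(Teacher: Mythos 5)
Your proof is correct and follows essentially the same route as the paper, which simply invokes \cite[Theorem 6.1.10]{MATRIX_Analysis} (the Gershgorin-based fact that a strictly diagonally dominant Hermitian matrix with positive diagonal is positive definite); you have merely unpacked that citation, correctly bounding each deleted row sum by the two-sided series $\sum_{j\neq 0}|h(j)|$. No gaps.
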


From the sufficient conditions in Proposition \ref{Toeplitz_Lemma}, we conclude that $\bSigma^{-1}$ is positive definite as long as $\sum_{j=1}^{k} |\alpha_{j}| < \alpha_{0}/2$, or equivalently, $\sum_{j=1}^{k} \left|\frac{2\alpha_{j}}{\alpha_{0}}\right| < 1$.  

Due to the fact that the components of $\bU_{n}$ are uniformly distributed in $[0,1)$, let us choose $\bmu = [\mu_{\mbox{\tiny Y}}+\tfrac{1}{2}~\mu_{\mbox{\tiny Y}}+\tfrac{1}{2}~\ldots~\mu_{\mbox{\tiny Y}}+\tfrac{1}{2}]^{\mbox{\tiny T}}$, where $\mu_{\mbox{\tiny Y}}$ is the expectation of the process $\{Y_{n}\}$, and then
\begin{align}
&\Exp \left\{ (\tilde{\bY}_{n}-\bmu)^{\mbox{\tiny T}}\bSigma^{-1}(\tilde{\bY}_{n}-\bmu)\right\} \nn \\
&= \Exp \left\{ \alpha_{0} \sum_{i=1}^{n} (\tilde{Y}_{i}-\mu_{\mbox{\tiny Y}}-\tfrac{1}{2})^{2} + \sum_{j=1}^{k} 2 \alpha_{j} \sum_{i=1}^{n-j} (\tilde{Y}_{i}-\mu_{\mbox{\tiny Y}}-\tfrac{1}{2}) (\tilde{Y}_{i+j}-\mu_{\mbox{\tiny Y}}-\tfrac{1}{2}) \right\} \\
\label{ref7}
&=  
\alpha_{0} \sum_{i=1}^{n} \Exp \left\{(\tilde{Y}_{i}-\mu_{\mbox{\tiny Y}}-\tfrac{1}{2})^{2} \right\} 
+ \sum_{j=1}^{k} 2 \alpha_{j} \sum_{i=1}^{n-j} \Exp \left\{ (\tilde{Y}_{i}-\mu_{\mbox{\tiny Y}}-\tfrac{1}{2}) (\tilde{Y}_{i+j}-\mu_{\mbox{\tiny Y}}-\tfrac{1}{2}) \right\}. 
\end{align}
For the first term in \eqref{ref7},
\begin{align}
\Exp \left\{(\tilde{Y}_{i}-\mu_{\mbox{\tiny Y}}-\tfrac{1}{2})^{2} \right\}
&= \Exp \left\{(Y_{i}-\mu_{\mbox{\tiny Y}}+U_{i}-\tfrac{1}{2})^{2} \right\}\\
&= \Exp \left\{(Y_{i}-\mu_{\mbox{\tiny Y}})^{2} + 2(Y_{i}-\mu_{\mbox{\tiny Y}})(U_{i}-\tfrac{1}{2}) +    (U_{i}-\tfrac{1}{2})^{2} \right\}\\
&= \Exp \left\{(Y_{i}-\mu_{\mbox{\tiny Y}})^{2}\right\} + \Exp \left\{  (U_{i}-\tfrac{1}{2})^{2} \right\}\\
\label{help4}
&= R_{\mbox{\tiny Y}}(0) + \tfrac{1}{12},
\end{align}
and for the second term in \eqref{ref7},
\begin{align}
&\Exp \left\{ (\tilde{Y}_{i}-\mu_{\mbox{\tiny Y}}-\tfrac{1}{2}) (\tilde{Y}_{i+j}-\mu_{\mbox{\tiny Y}}-\tfrac{1}{2}) \right\} \nn \\
&= \Exp \left\{ (Y_{i}-\mu_{\mbox{\tiny Y}}+U_{i}-\tfrac{1}{2}) (Y_{i+j}-\mu_{\mbox{\tiny Y}}+U_{i+j}-\tfrac{1}{2}) \right\} \\
&= \Exp \left\{ (Y_{i}-\mu_{\mbox{\tiny Y}})(Y_{i+j}-\mu_{\mbox{\tiny Y}}) + (Y_{i}-\mu_{\mbox{\tiny Y}})(U_{i+j}-\tfrac{1}{2}) \right. \nn \\
&\left.~~~~~~~~ + (Y_{i+j}-\mu_{\mbox{\tiny Y}})(U_{i}-\tfrac{1}{2}) + (U_{i}-\tfrac{1}{2})(U_{i+j}-\tfrac{1}{2}) \right\} \\
&= \Exp \left\{ (Y_{i}-\mu_{\mbox{\tiny Y}})(Y_{i+j}-\mu_{\mbox{\tiny Y}}) \right\} \\
\label{help5}
&= R_{\mbox{\tiny Y}}(j).
\end{align}
Substituting it back into \eqref{ref7} yields
\begin{align}
\Exp \left\{ (\tilde{\bY}_{n}-\bmu)^{\mbox{\tiny T}}\bSigma^{-1}(\tilde{\bY}_{n}-\bmu)\right\}
&= \alpha_{0} n \left(R_{\mbox{\tiny Y}}(0) + \tfrac{1}{12}\right) + \sum_{j=1}^{k}2\alpha_{j} (n-j) R_{\mbox{\tiny Y}}(j) \\
&\leq \alpha_{0} n \left(R_{\mbox{\tiny Y}}(0) + \tfrac{1}{12}\right) + 2n \sum_{j=1}^{k}\alpha_{j} R_{\mbox{\tiny Y}}(j) \\
&= n \left[\alpha_{0} \left(R_{\mbox{\tiny Y}}(0) + \tfrac{1}{12}\right) + 2\sum_{j=1}^{k}\alpha_{j} R_{\mbox{\tiny Y}}(j) \right] \\
\label{help1}
&= n f(\balpha).
\end{align}
Since $\bSigma^{-1}$ is positive definite, its determinant is nonzero, and the determinant of $\bSigma$ is given by
\begin{align}
\det(\bSigma) = [\det(\bSigma^{-1})]^{-1}.
\end{align} 
At the moment, let us express the determinant of $\bSigma^{-1}$ as a product of the eigenvalues $\{\tau_{n,k}\}_{k=1}^{n}$ of $\bSigma^{-1}$, such that the second term in \eqref{ref4} is given by
\begin{align}
\frac{1}{2}\log |\bSigma|
&= -\frac{1}{2}\log |\bSigma^{-1}| \\
&= -\frac{1}{2}\log \left(\prod_{k=1}^{n} \tau_{n,k} \right) \\
\label{help2}
&= -\frac{1}{2} \sum_{k=1}^{n} \log \left( \tau_{n,k} \right).
\end{align}   
We now upper-bound \eqref{ref4} using \eqref{help1} and \eqref{help2}, and also substitute the expression for the normalizing constant from \eqref{Normalizing}. We divide by $n$ and arrive at    
\begin{align}
&-\frac{1}{n} \Exp\left[\log q_{\mbox{\tiny t}}(\tilde{\bY}_{n})\right] \nn \\
&\leq - \frac{1}{n} \log \left(\frac{\Gamma[(\nu+n)/2]}{\Gamma(\nu/2)\nu^{n/2}\pi^{n/2}}\right) -\frac{1}{2n} \sum_{k=1}^{n} \log \left( \tau_{n,k} \right)
+ \frac{1}{2}\frac{\nu+n}{n} \log \left(1+\frac{n}{\nu} f(\balpha) \right) \\
\label{ref10}
&= - \frac{1}{n} \log\Gamma[(\nu+n)/2]
+ \frac{1}{n} \log\Gamma(\nu/2)
+ \frac{1}{2} \log(\nu \pi) \nn \\
&~~~~~~
-\frac{1}{2n} \sum_{k=1}^{n} \log \left( \tau_{n,k} \right)
+ \frac{1}{2}\frac{\nu+n}{n} \log \left(1+\frac{n}{\nu} f(\balpha) \right).
\end{align}
In order to proceed, we invoke the following inequalities from \cite[Lemma 1]{Minc}
\begin{align}
\label{Log_Gamma_Inequalities}
0< \log \Gamma(x) - \left[\left(x-\frac{1}{2}\right)\log(x) - x + \frac{1}{2} \log (2\pi)\right] < \frac{1}{x}, 
\end{align}
which are valid for $x>1$. From the left-hand-side inequality of \eqref{Log_Gamma_Inequalities} we get
\begin{align}
\log \Gamma[(\nu+n)/2] > \left(\frac{\nu+n}{2}-\frac{1}{2}\right)\log\left(\frac{\nu+n}{2}\right) - \frac{\nu+n}{2} + \frac{1}{2} \log (2\pi),
\end{align}
and so,
\begin{align}
\label{help6}
-\frac{1}{n}\log \Gamma[(\nu+n)/2] < \left(\frac{1}{2n}-\frac{\nu+n}{2n}\right)\log\left(\frac{\nu+n}{2}\right) + \frac{\nu+n}{2n} - \frac{1}{2n} \log (2\pi).
\end{align}
Continuing from \eqref{ref10}, 
\begin{align}
&-\frac{1}{n} \Exp\left[\log q_{\mbox{\tiny t}}(\tilde{\bY}_{n})\right] \nn \\
&\leq \left(\frac{1}{2n}-\frac{\nu+n}{2n}\right)\log\left(\frac{\nu+n}{2}\right) + \frac{\nu+n}{2n} - \frac{1}{2n} \log (2\pi)
+ \frac{1}{n} \log\Gamma(\nu/2)
+ \frac{1}{2} \log(\nu \pi) \nn \\
&~~~~~~
-\frac{1}{2n} \sum_{k=1}^{n} \log \left( \tau_{n,k} \right)
+ \frac{1}{2}\frac{\nu+n}{n} \log \left(1+\frac{n}{\nu} f(\balpha) \right) \\
&= \frac{1}{2n} \log\left(\frac{\nu+n}{2}\right) + \frac{\nu+n}{2n} - \frac{1}{2n} \log (2\pi)
+ \frac{1}{n} \log\Gamma(\nu/2)
+ \frac{1}{2} \log(\nu \pi) \nn \\
&~~~~~~
-\frac{1}{2n} \sum_{k=1}^{n} \log \left( \tau_{n,k} \right)
+ \frac{1}{2}\frac{\nu+n}{n} \log \left(\frac{2+\frac{2n}{\nu} f(\balpha)}{\nu+n} \right).
\end{align}
Taking now the limit, we find that the entropy rate is upper-bounded by
\begin{align}
\bar{H}(Y) 
&= \lim_{n \to \infty} \frac{1}{n} H(\bY_{n}) \\
&\leq \lim_{n \to \infty} \left\{ \frac{1}{2n} \log\left(\frac{\nu+n}{2}\right) + \frac{\nu+n}{2n} - \frac{1}{2n} \log (2\pi)
+ \frac{1}{n} \log\Gamma(\nu/2)
+ \frac{1}{2} \log(\nu \pi) \right. \nn \\
&~~~~~~~~~~\left.
-\frac{1}{2n} \sum_{k=1}^{n} \log \left( \tau_{n,k} \right)
+ \frac{1}{2}\frac{\nu+n}{n} \log \left(\frac{2+\frac{2n}{\nu} f(\balpha)}{\nu+n} \right) \right\} \\
\label{ref5}
&= 
\frac{1}{2} 
+ \frac{1}{2} \log(\nu \pi) 
-\frac{1}{4\pi} \int_{0}^{2\pi} \log \Theta(\balpha,\lambda) \dint\lambda
+ \frac{1}{2}\log \left(\frac{2}{\nu} f(\balpha) \right) \\
&= \frac{1}{2} \log \left(2\pi e f(\balpha)\right)
-\frac{1}{4\pi} \int_{0}^{2\pi} \log \Theta(\balpha,\lambda) \dint\lambda,
\end{align}
where \eqref{ref5} follows from Szeg\"o theorem \cite{Szego} with
\begin{align}
\Theta(\balpha,\lambda)
&= \sum_{m=-k}^{k} \alpha_{m} e^{im \lambda} \\
&= \alpha_{0} + \sum_{m=1}^{k} \alpha_{m} \left(e^{im \lambda} + e^{-im \lambda}\right) \\ 
&= \alpha_{0} + \sum_{m=1}^{k} 2 \alpha_{m} \cos (m \lambda). 
\end{align}
As a final step, we may optimize over $\balpha$ to get the tightest possible bound:
\begin{align} \label{ToRef1}
\bar{H}(Y) 
\leq \inf_{\{\balpha\in \mathbb{R}^{k+1}:~ \alpha_{0} > 0,~\sum_{j=1}^{k} \left|2\alpha_{j}/\alpha_{0}\right| < 1 \}} \left\{ \frac{1}{2} \log \left(2\pi e f(\balpha)\right)
-\frac{1}{4\pi} \int_{0}^{2\pi} \log \Theta(\balpha,\lambda) \dint\lambda \right\}.
\end{align}
This minimization problem can be simplified as follows. Note that the objective function in \eqref{ToRef1} can also be written as
\begin{align}
\frac{1}{2} \log \left(2\pi e f(\balpha)\right)
-\frac{1}{4\pi} \int_{0}^{2\pi} \log \Theta(\balpha,\lambda) \dint\lambda
= \frac{1}{4\pi} \int_{0}^{2\pi} \log \left[\frac{2\pi e f(\balpha)}{\Theta(\balpha,\lambda)}\right]  \dint\lambda.
\end{align} 
Now,
\begin{align}
\frac{f(\balpha)}{\Theta(\balpha,\lambda)}
&= \frac{\alpha_{0} \left(R_{\mbox{\tiny Y}}(0) + \tfrac{1}{12}\right) + \sum_{m=1}^{k} 2\alpha_{m} R_{\mbox{\tiny Y}}(m)}{\alpha_{0} + \sum_{m=1}^{k} 2 \alpha_{m} \cos (m \lambda)} \\
&= \frac{\alpha_{0}\left[\left(R_{\mbox{\tiny Y}}(0) + \tfrac{1}{12}\right) + \sum_{m=1}^{k} \frac{2\alpha_{m}}{\alpha_{0}} R_{\mbox{\tiny Y}}(m)\right]}{\alpha_{0}\left[1 + \sum_{m=1}^{k} \frac{2 \alpha_{m}}{\alpha_{0}} \cos (m \lambda)\right]} \\
\label{ToRef2}
&= \frac{\left(R_{\mbox{\tiny Y}}(0) + \tfrac{1}{12}\right) + \sum_{m=1}^{k} \frac{2\alpha_{m}}{\alpha_{0}} R_{\mbox{\tiny Y}}(m)}{1 + \sum_{m=1}^{k} \frac{2 \alpha_{m}}{\alpha_{0}} \cos (m \lambda)}.
\end{align}
Let us denote $\beta_{m} = \frac{2\alpha_{m}}{\alpha_{0}}$, $m=1,2,\ldots,k$.
Since the expression in \eqref{ToRef2} depends on the parameters $\{\alpha_{0}, \alpha_{1}, \ldots, \alpha_{k}\}$ only via $\bbeta = \{\beta_{1}, \ldots , \beta_{k}\}$, then the minimization problem in \eqref{ToRef1} is equivalent to 
\begin{align} 
\bar{H}(Y) 
\leq \inf_{\{\bbeta \in \mathbb{R}^{k}:~\sum_{m=1}^{k} |\beta_{m}| < 1 \}} \left\{ \frac{1}{2} \log \left(2\pi e \Sigma(\bbeta)\right)
-\frac{1}{4\pi} \int_{0}^{2\pi} \log \Psi(\bbeta,\lambda) \dint\lambda \right\},
\end{align}
where $\Sigma(\bbeta)$ and $\Psi(\bbeta,\lambda)$ are defined in \eqref{Def_Sigma} and \eqref{Def_Psi}, respectively.

\section*{Appendix C - Proof of Theorem \ref{Theorem_t_Distribution_Order_1}}
\renewcommand{\theequation}{C.\arabic{equation}}
\setcounter{equation}{0}

Recall from \eqref{ref4} that
\begin{align}
&- \Exp\left[\log q_{\mbox{\tiny t}}(\tilde{\bY}_{n})\right] \nn \\
\label{ref4b}
&\leq -\log C_{n} +\tfrac{1}{2}\log |\bSigma|
+ \tfrac{1}{2}(\nu+n) \log \left[1+\tfrac{1}{\nu} \Exp \left\{ (\tilde{\bY}_{n}-\bmu)^{\mbox{\tiny T}}\bSigma^{-1}(\tilde{\bY}_{n}-\bmu)\right\} \right].
\end{align}

In order to confine ourself to a set of only two free parameters, let us consider a tridiagonal Toeplitz matrix of the form 
\begin{align}
\label{ref6b}
\bSigma^{-1} = 
\begin{pmatrix}
\alpha   & \beta    & 0        & \cdots & 0       \\
\beta    & \alpha   & \beta    & \cdots & 0       \\
0        & \beta    &\alpha    & \cdots & 0       \\
\vdots   & \vdots   & \vdots   & \ddots & \vdots  \\
0        & 0        & 0        & \cdots & \alpha 
\end{pmatrix},
\end{align} 
where $\alpha>0$. Since every positive definite matrix is invertible and its inverse is also positive definite \cite[p.\ 438, Theorem 7.2.1]{MATRIX_Analysis}, it is enough to require that $\bSigma^{-1}$ is positive definite.
From the sufficient conditions in Proposition \ref{Toeplitz_Lemma}, we conclude that $\bSigma^{-1}$ is positive definite as long as $\beta \in (-\alpha/2,\alpha/2)$. 
Due to the fact that the components of $\bU_{n}$ are uniformly distributed in $[0,1)$, let us choose $\bmu = [\mu_{\mbox{\tiny Y}}+\tfrac{1}{2}~\mu_{\mbox{\tiny Y}}+\tfrac{1}{2}~\ldots~\mu_{\mbox{\tiny Y}}+\tfrac{1}{2}]^{\mbox{\tiny T}}$, where $\mu_{\mbox{\tiny Y}}$ is the expectation of the process $\{Y_{n}\}$, and then
\begin{align}
&\Exp \left\{ (\tilde{\bY}_{n}-\bmu)^{\mbox{\tiny T}}\bSigma^{-1}(\tilde{\bY}_{n}-\bmu)\right\} \nn \\
\label{ref7b}
&~~~=  
\alpha \sum_{i=1}^{n} \Exp \left\{(\tilde{Y}_{i}-\mu_{\mbox{\tiny Y}}-\tfrac{1}{2})^{2} \right\} 
+ 2 \beta \sum_{i=1}^{n-1} \Exp \left\{ (\tilde{Y}_{i}-\mu_{\mbox{\tiny Y}}-\tfrac{1}{2}) (\tilde{Y}_{i+1}-\mu_{\mbox{\tiny Y}}-\tfrac{1}{2}) \right\}. 
\end{align}
Substituting \eqref{help4} and \eqref{help5} back into \eqref{ref7b} yields
\begin{align}
\Exp \left\{ (\tilde{\bY}_{n}-\bmu)^{\mbox{\tiny T}}\bSigma^{-1}(\tilde{\bY}_{n}-\bmu)\right\}
&= \alpha n \left(R_{\mbox{\tiny Y}}(0) + \tfrac{1}{12}\right) + 2\beta(n-1) R_{\mbox{\tiny Y}}(1) \\
&\leq n \left[\alpha \left(R_{\mbox{\tiny Y}}(0) + \tfrac{1}{12}\right) + 2\beta R_{\mbox{\tiny Y}}(1) \right] \\
\label{help1b}
&= n f(\alpha,\beta).
\end{align}
Since $\bSigma^{-1}$ is positive definite, its determinant is nonzero, and the determinant of $\bSigma$ is given by
\begin{align}
\det(\bSigma) = [\det(\bSigma^{-1})]^{-1}.
\end{align} 
In order to evaluate the determinant of $\bSigma^{-1}$ as a function of the dimension $n$, we denote its determinant by $\phi_{n}$. Note that a recursion relation between the consecutive $\phi_{n}$ is given by
\begin{align}
\label{ref8b}
\phi_{n+2} = \alpha \phi_{n+1} - \beta^{2} \phi_{n},
\end{align}   
with the initial conditions of 
\begin{align}
\phi_{0} = 1;~~~\phi_{1} = \alpha.
\end{align}
The second order difference equation in \eqref{ref8b} can be solved by invoking Z-transform techniques to yield the closed form expression of 
\begin{align}
\phi_{n} = \frac{I^{n+1}-J^{n+1}}{\sqrt{\alpha^{2}-4\beta^{2}}},
\end{align} 
where,
\begin{align}
I = I(\alpha,\beta) = \frac{\alpha + \sqrt{\alpha^{2}-4\beta^{2}}}{2};~~~
J = J(\alpha,\beta) = \frac{\alpha - \sqrt{\alpha^{2}-4\beta^{2}}}{2}.
\end{align}
Thus,
\begin{align}
\det(\bSigma) 
&=\frac{\sqrt{\alpha^{2}-4\beta^{2}}}{I^{n+1}-J^{n+1}}\\
\label{ref9b}
&\leq\frac{\sqrt{\alpha^{2}-4\beta^{2}}}{I^{n+1}-\tfrac{1}{2}I^{n+1}}\\
\label{help2b}
&=\frac{2\sqrt{\alpha^{2}-4\beta^{2}}}{I^{n+1}},
\end{align}
where \eqref{ref9b} holds for all sufficiently large $n$, since $I > J$. We now upper-bound \eqref{ref4b} using \eqref{help1b} and \eqref{help2b}, and also substitute the expression for the normalizing constant from \eqref{Normalizing}. We divide by $n$ and arrive at    
\begin{align}
&-\frac{1}{n} \Exp\left[\log q_{\mbox{\tiny t}}(\tilde{\bY}_{n})\right] \nn \\
\label{ref10b}
&\leq - \frac{1}{n} \log\Gamma[(\nu+n)/2]
+ \frac{1}{n} \log\Gamma(\nu/2)
+ \frac{1}{2} \log(\nu \pi) \nn \\
&~~~~~~
+\frac{1}{2n}\log \left(2\sqrt{\alpha^{2}-4\beta^{2}}\right)
-\frac{n+1}{2n}\log I(\alpha,\beta)
+ \frac{1}{2}\frac{\nu+n}{n} \log \left(1+\frac{n}{\nu} f(\alpha,\beta) \right).
\end{align}
Upper-bounding \eqref{ref10b} using \eqref{help6} and some basic algebra yields that
\begin{align}
&-\frac{1}{n} \Exp\left[\log q_{\mbox{\tiny t}}(\tilde{\bY}_{n})\right] \nn \\
&\leq \frac{1}{2n} \log\left(\frac{\nu+n}{2}\right) + \frac{\nu+n}{2n} - \frac{1}{2n} \log (2\pi)
+ \frac{1}{n} \log\Gamma(\nu/2)
+ \frac{1}{2} \log(\nu \pi) \nn \\
&~~~~~~
+\frac{1}{2n}\log \left(2\sqrt{\alpha^{2}-4\beta^{2}}\right)
-\frac{n+1}{2n}\log I(\alpha,\beta)
+ \frac{1}{2}\frac{\nu+n}{n} \log \left(\frac{2+\frac{2n}{\nu} f(\alpha,\beta)}{\nu+n} \right).
\end{align}
Taking now the limit, we find that the entropy rate is upper-bounded by
\begin{align}
\bar{H}(Y) 
&\leq  
\frac{1}{2} 
+ \frac{1}{2} \log(\nu \pi) 
-\frac{1}{2}\log I(\alpha,\beta)
+ \frac{1}{2}\log \left(\frac{2}{\nu} f(\alpha,\beta) \right) \\
&= \frac{1}{2} \log \left[2\pi e \frac{f(\alpha,\beta)}{I(\alpha,\beta)}\right].
\end{align}
As a final step, we may optimize over $\alpha$ and $\beta$ to get the tightest possible bound:
\begin{align} \label{ToRef3}
\bar{H}(Y) 
\leq \inf_{\alpha > 0} \inf_{\beta \in (-\alpha/2,\alpha/2)} \frac{1}{2} \log \left(2\pi e \frac{f(\alpha,\beta)}{I(\alpha,\beta)}\right).
\end{align}
Following similar considerations as at the end of the proof of Theorem \ref{Theorem_t_Distribution_Order_K}, we are able to simplify the minimization problem in \eqref{ToRef3} and arrive at    
\begin{align} 
\bar{H}(Y) 
\leq \inf_{s \in (-1,1)} \frac{1}{2} \log \left(4\pi e \frac{\left(R_{\mbox{\tiny Y}}(0) + \tfrac{1}{12}\right) + s R_{\mbox{\tiny Y}}(1)}{1 + \sqrt{1-s^{2}}}\right).
\end{align}

\section*{Appendix D - Proof of Lemma \ref{Lemma_Poisson}}
\renewcommand{\theequation}{D.\arabic{equation}}
\setcounter{equation}{0}

By Poisson's integral formula we have 
\begin{align}
\frac{1}{2\pi} \int_{0}^{2\pi} \log \left|e^{i\lambda}+\alpha\right|^{2} \dint\lambda = 0,
\end{align}
if $|\alpha| \leq 1$. This is equivalent to 
\begin{align}
\frac{1}{2\pi} \int_{0}^{2\pi} \log \left(1+\alpha^{2} + 2\alpha \cos(\lambda)\right) \dint\lambda = 0,
\end{align}
or, to, 
\begin{align}
\log \left(1+\alpha^{2}\right) + 
\frac{1}{2\pi} \int_{0}^{2\pi} \log \left(1 + \frac{2\alpha}{1+\alpha^{2}} \cos(\lambda)\right) \dint\lambda = 0.
\end{align}
For $-1 \leq \alpha \leq 1$, the function $t(\alpha) = \tfrac{2\alpha}{1+\alpha^{2}}$ is monotonically increasing in the range $[-1,1]$ and its inverse is given by $\alpha(t) = \tfrac{1-\sqrt{1-t^{2}}}{t}$. Elementary algebra yields that 
\begin{align}
1+\alpha^{2}(t) = \frac{2-2\sqrt{1-t^{2}}}{t^{2}},
\end{align}
and thus, for any $t \in [-1,0)\cup(0,1]$,
\begin{align}
\frac{1}{2\pi} \int_{0}^{2\pi} \log \left(1 + t \cos(\lambda)\right) \dint\lambda = -\log \left(\frac{2-2\sqrt{1-t^{2}}}{t^{2}}\right),
\end{align}
and specifically for $t=0$,
\begin{align}
\frac{1}{2\pi} \int_{0}^{2\pi} \log \left(1 + t \cos(\lambda)\right) \dint\lambda = 0.
\end{align}

\end{document}